\def\beq{\begin{equation}}
\def\eeq{\end{equation}}
\newtheorem{theorem}{Theorem}
\newtheorem{proposition}[theorem]{Proposition}
\newtheorem{lemma}[theorem]{Lemma}
\newtheorem{corollary}[theorem]{Corollary}
\newtheorem{example}{Example}
\newcommand{\ba}{\begin{array}}
\newcommand{\ea}{\end{array}}
\newcommand{\be}{\begin{equation}}
\newcommand{\ee}{\end{equation}}
\newcommand{\mc}{\mathcal}
\newcommand{\1}{\mathbbm{1}}
\newcommand{\N}{\mathbb{N}}
\DeclareMathOperator*{\argmax}{argmax}
\def\1{\mathds{1}}
\def\N{\mathbb{N}}
\title{\LARGE \bf A game theoretic approach to a peer-to-peer cloud storage model}
\author{Fabio Fagnani$^{1}$,  Barbara Franci$^{2}$, and Ennio Grasso$^{3}$
\thanks{$^{1}$Fabio Fagnani is with the Department of Mathematical Sciences, Politecnico di Torino, 
        {\tt\small fabio.fagnani@polito.it}}%
\thanks{$^{2}$Barbara Franci is with the Department of Mathematical Sciences, Politecnico di Torino,
        {\tt\small barbara.franci@polito.it}}%
\thanks{$^{3}$Ennio Grasso is with Swarm Lab - Joint Open Lab TelecomItalia, Torino
        {\tt\small ennio.grasso@telecomitalia.it }}%
}
\begin{document}

\maketitle
\thispagestyle{empty}
\pagestyle{empty}

\begin{abstract}
Classical cloud storage based on external data providers has been recognized to suffer from a number of drawbacks. This is due to its inherent centralized architecture which makes it vulnerable to external attacks, malware, technical failures, as well to the large premium charged for business purposes.  In this paper, we propose an alternative distributed peer-to-peer cloud storage model which is based on the observation that the users themselves often have available storage capabilities to be offered in principle to other users. Our set-up is that of a network of users connected through a graph, each of them being at the same time a source of data to be stored externally and a possible storage resource. We cast the peer-to-peer storage model to a Potential Game and we propose an original decentralized algorithm which makes units interact, cooperate, and store a complete back up of their data on their connected neighbors. We present theoretical results on the algorithm as well a good number of simulations which validate our approach.

\end{abstract}

\section{INTRODUCTION}


Cloud storage on the Internet has come to rely almost exclusively on data providers serving as trusted third parties to transfer and store the data. While the system works well enough in most cases, it still suffers from the inherent weaknesses of the trust-based model. The traditional cloud is open to a variety of security threats, including man-in-the-middle attacks, and malware that expose sensitive and private consumer and corporate data. Furthermore, current cloud storage applications are charging large premiums on data storage facilities to business clients. Moreover, these cloud storage providers may have technical failures that can cause data breaches and unavailability, much to the distress of the users and applications that depend on them.

To address the aforementioned shortcomings, a decentralized peer-to-peer cloud storage model would be the right answer. On the wake of the successful peer-to-peer file sharing model of applications like BitTorrent and its lookalike, the same philosophy may well be leveraged on a different but very similar application service like storage.
Indeed a slew of fledgling and somehow successful startups are entering in this market niche. Among the most noteworthy examples are:
\begin{itemize}
\item Storj: www.youtube.com/channel/UC-cTEqWwZV5Rl-h0RZsp2Qw,
\item BitTorrent Sync:  www.getsync.com/,
\item Ethos:  www.youtube.com/watch?v=qUftGCQ5dqo,
\item SpaceMonkeys: www.spacemonkey.com/.
\end{itemize}
Clearly, a completely decentralized peer-to-peer model must account for some challenging technical difficulties that are absent in a centralized cloud model. Firstly, security and privacy must be carefully implemented by ensuring end-to-end encryption resistant to attackers. In addition, the model must account for the latency, performance, and downtime of average user devices.

Albeit the above technical issues are challenging, they can be addressed with the right tools and architectures available at current state of the art technology and will not be considered in this paper. What remains an open question up to now is how to endow the system with the right incentives for the end users to collaborate and share their storage commodity with each other. We believe that the answer to that question comes from the formal framework of game-theory which provides the mathematical tools to ensure successful cooperation among users / players. There are (at least) two ways in which game-theory, with its plenty folk-theorems, can be applied to the real world. One is as a tool to study an ongoing phenomenon. This is the typical setting of social and psychological sciences. A second more engineering approach is to leverage game-theory to design specific mechanisms, i.e. set rules of the game that will bring the interaction to the most desired outcome, which is basically the maximum global welfare, or Pareto dominated equilibrium of the game. This work follows this second approach.

In this paper, we consider a network of units (PC's but possibly also smartphones or other devices possessing storage capabilities) which need to store externally a back up of their data and, at the same time, can offer space available to store data of other connected units. In this set up, we cast the peer-to-peer storage model to an allocation Potential Game and we propose an original decentralized algorithm which make units interact, cooperate, and store a complete back up of their data on their connected neighbors. 

Units are assumed to be connected through a network and, autonomously, at random time, to activate and allocate or move their data pieces among the neighboring units. Formally, each unit has a utility function which gives a value to their neighbors on the basis of their reliability, their current congestion (resources have bounded storage capabilities), and the amount of data the unit has already stored in them. Following classical evolutionary game theory \cite{SAN}, we propose an algorithm based on a noisy best response action: each time a unit activates, it decides the neighbor to use on the basis of a Gibbs probability distribution having its peak on the maxima of the utility function. 

In the remaining part of this section, we formally define the storage allocation problem and we show its equivalence with classical matching problem on a graph. This allows us to use celebrated Hall's theorem and give a necessary and sufficient condition for the allocation problem to be solvable. Section \ref{sec: algorithm} is devoted to cast the problem to a potential game theoretic framework \cite{ROS} and to propose a distributed algorithm which is an instance of a noisy best response dynamics \cite{SAN}. We claim a fundamental result, which will be proven in a forthcoming paper, which says that in the double limit when time goes to infinity and the noise parameter goes to $0$, the algorithm converges to a Nash equilibrium which is, in particular, a global maximum of the potential function. This guarantees that the solution will indeed be close to the global welfare of the community. Finally, Section \ref{sec: simulations} is devoted to the presentation of an extensive set of simulations. A conclusions section ends the paper.

\subsection{Related Work}
Though allocation games have been considered before in the literature, they all substantially differ from the model we propose and study in this paper. In \cite{ACG} the authors consider an allocation problem casted to a pure congestion game. Utility functions of units measure the congestion of a resource simply as a function of the number of units currently using it, but they do not impose any strict storage limitation. The algorithm they propose is a classical best response algorithm and is shown to achieve Nash equilibrium. Our model differs considerably as we also consider reliability of the resources and data fragmentation in the utility functions and, moreover, we impose strict storage limitations. A crucial consequence of this is that classical best response algorithms would not work in our case: Example \ref{counterexample} shows a situation where such an algorithm would halt before allocation is completed. Allocation games are also considered in \cite{CRAG} where, however, the proposed algorithm units are not interacting through a graph but rather through a device that acts like a leader selecting which resources can be used. A related context where congestion games have been used is that of networking routing \cite{HB}.

At a broader level, the noisy best response algorithm we propose in this paper fits in the so-called evolutionary game theory \cite{SAN} which has already been extensively used in studying other networking problems \cite{TEM}.

\subsection{The model}

Consider a set $\mc X$ of units which play the double role of users who have to allocate externally a back up of their data, as well resources where data from other units can be allocated. Generically, an element of $\mc X$ will be called a unit, while the terms user and resource will be used when the unit is considered in the two possible roles of, respectively, a source or a recipient of data. We assume units to be connected through a directed graph $\mc G=(\mc X,\mc E)$ where a link $(x,y)\in\mc E$ means that unit $x$ is allowed to storage data in unit $y$. We denote by $$N_x:=\{y\in\mc X\,|\, (x,y)\in\mc E\},\quad N^-_y:=\{x\in\mc X\,|\, (x,y)\in\mc E\}$$
respectively, the out- and the in-neighborhood of a node. Note the important different interpretation in our context: $N_x$ represents the set of resources available to unit $x$ while $N^-_y$ is the set of units having access to resource $y$. If $D\subseteq \mc X$, we put $N(D)=\cup_{x\in D}N_x$ and $N^-(D)=\cup_{x\in D}N_x^-$.

We imagine the data possessed by the units to be quantized atoms of the same size. Each unit $x$ is characterized by two non negative integers:
\begin{itemize}
\item $\alpha_x$ is the number of data atoms that unit $x$ needs to back up into his neighbors, 
\item $\beta_x$ is the number of data atoms that unit $x$ can accept and store from his neighbors.
\end{itemize}
The numbers $\{\alpha_x\}$ and $\{\beta_x\}$ will be assembled into two vectors denoted, respectively, $\alpha$ and $\beta$. We also define 
$$\mc A_x=\{(x,a)\,|\, a\in\{1,\dots ,\alpha_x\}\},\quad \mc A=\bigcup_{x\in\mc X}\mc A_x$$
Given the triple $(\mc G, \alpha, \beta)$, we define an allocation 
as any map $Q:\mc A\to\mc X$ satisfying the properties expressed below.
\begin{enumerate}
\item[(C1)] {\bf Graph constraint} $Q(x,a)\in N_x$ for all $x\in\mc X$ and $a\in \{1,\dots ,\alpha_x\}$;
\item[(C2)] {\bf Storage limitation} For every $y\in\mc X$, 
$$ |Q^{-1}(y)|\leq \beta_y$$
\end{enumerate}
The fact that $Q(x,a)=y$ means that user $x$ has allocated the data atom $a$ into resource $y$.

We will say that the allocation problem is solvable if an allocation $Q$ exists. We denote by $\mc Q$ the set of allocations.  We will also need to consider partial allocations, namely maps $Q:D\to\mc X$ where 
$D\subseteq \mc A$ satisfying, where defined, conditions (C1) and (C2). We denote by $\mc Q_p$ the set of partial allocations. 

In the following section we study the conditions under which the allocation problem is solvable, namely conditions under which $\mc Q$ is non empty.

\subsection{The allocation problem as a matching problem}
Define
$$\mc B_y=\{(y,b)\,|\, b\in\{1,\dots ,\beta_y\}\},\quad \mc B=\bigcup_{y\in\mc X}\mc B_y$$
Consider now the bipartite graph $\mc P=(\mc A\times \mc B, \mc E_{\mc P})$ where $((x,a),(y,b))\in\mc E_{\mc P}$ iff $(x,y)\in \mc E$. An allocation naturally induces a matching on $\mc P$ which is complete on $\mc A$. To this aim, notice that, from $Q\in\mc Q$ and using condition (C2), we can construct an injective mapping $\tilde Q:\mc A\to\mc B$ such that $\tilde Q(x,a)=(Q(x,a), b)$ for every $x\in\mc X$ and for all $a$.
We then define
$$\mc M:=\bigcup\limits_{x\in\mc X}\{((x,a),(y,b))\in \mc A\times \mc B\,|\,(y,b)=\tilde Q(x,a)\}$$ 
It is clear that this procedure can be inverted and that from any matching of $\mc M$ complete on $\mc A$ we can associate an allocation for $(\mc G, \alpha, \beta)$. This equivalence allows to use classical results like the Hall's marriage theorem to characterize the existence of allocations. Precisely we have the following result
\begin{theorem}\label{theo allocation exists} Given $(\mc G, \alpha, \beta)$, there exists an allocation iff the following condition is satisfied:
\beq\label{cond allocation exists}
\sum\limits_{x\in D}\alpha_x\leq \sum\limits_{y\in N(D)}\beta_y\quad\forall D\subseteq \mc X
\eeq
\end{theorem}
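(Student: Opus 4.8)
The plan is to invoke Hall's marriage theorem on the bipartite graph $\mc P=(\mc A\times\mc B,\mc E_{\mc P})$, exploiting the equivalence established above between allocations for $(\mc G,\alpha,\beta)$ and matchings of $\mc P$ complete on $\mc A$ (i.e.\ saturating the side $\mc A$). Hall's theorem guarantees that such a matching exists if and only if $|N_{\mc P}(S)|\geq |S|$ for every $S\subseteq\mc A$, where $N_{\mc P}(S)$ is the neighborhood of $S$ in $\mc P$. The task therefore reduces to showing that this condition, quantified over all subsets of $\mc A$, is equivalent to \eqref{cond allocation exists}, which is quantified only over subsets of $\mc X$.

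For necessity I would argue directly, without Hall. Suppose $Q\in\mc Q$ and fix $D\subseteq\mc X$. By (C1) every atom $(x,a)$ with $x\in D$ satisfies $Q(x,a)\in N_x\subseteq N(D)$, so the associated injective map $\tilde Q$ sends the $\sum_{x\in D}\alpha_x$ atoms of $\bigcup_{x\in D}\mc A_x$ injectively into $\bigcup_{y\in N(D)}\mc B_y$, a set of cardinality $\sum_{y\in N(D)}\beta_y$ since the $\mc B_y$ are pairwise disjoint. Comparing cardinalities yields \eqref{cond allocation exists}.

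For sufficiency, assume \eqref{cond allocation exists} and let $S\subseteq\mc A$ be arbitrary. Let $D=\{x\in\mc X\,|\,\mc A_x\cap S\neq\emptyset\}$ be the projection of $S$ onto $\mc X$. The key observation is that $N_{\mc P}(S)$ depends only on $D$: from the definition of $\mc E_{\mc P}$ we have $N_{\mc P}(\{(x,a)\})=\bigcup_{y\in N_x}\mc B_y$, hence $N_{\mc P}(S)=\bigcup_{y\in N(D)}\mc B_y$, whose cardinality is $\sum_{y\in N(D)}\beta_y$. On the other hand $S\subseteq\bigcup_{x\in D}\mc A_x$, so $|S|\leq\sum_{x\in D}\alpha_x$. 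Chaining these with \eqref{cond allocation exists} applied to $D$ gives $|N_{\mc P}(S)|=\sum_{y\in N(D)}\beta_y\geq\sum_{x\in D}\alpha_x\geq |S|$, so Hall's condition holds for every $S$; a matching of $\mc P$ complete on $\mc A$ exists, and the induced allocation completes the proof.

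The main obstacle is this last reduction: one must recognize that among all $S$ with a prescribed projection $D$ the full preimage $\bigcup_{x\in D}\mc A_x$ is the extremal case, so that checking Hall's inequality only on these ``saturated'' subsets is enough; once this is seen, the equivalence with \eqref{cond allocation exists} is immediate. (I assume throughout that $\mc X$ is finite, so that the finite form of Hall's theorem applies; otherwise one would invoke the version for locally finite bipartite graphs.)
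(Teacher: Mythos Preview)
Your proof is correct and follows essentially the same route as the paper: both reduce to Hall's theorem on the bipartite graph $\mc P$ and exploit the observation that $N_{\mc P}(S)$ depends only on the projection $D=\{x:\mc A_x\cap S\neq\emptyset\}$, so that Hall's condition need only be checked on the ``saturated'' subsets $\bigcup_{x\in D}\mc A_x$, where it becomes exactly \eqref{cond allocation exists}. The only cosmetic difference is that you handle necessity by a direct counting argument via the injection $\tilde Q$, whereas the paper gets both directions at once from the ``iff'' in Hall's theorem.
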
 
\begin{proof}
By Hall's theorem, the existence of a matching in $\mc P$ complete on $A$ is equivalent to the condition
\beq\label{Hall theorem}
|A|\leq |N^{\mc P}(A)|\quad\forall  A\subseteq \mc A
\eeq
where $N^{\mc P}(A)\subseteq B$ is the out-neighborhood of $A$ in $\mc P$.
Given $A\subseteq \mc A$ let $\bar{A}$ be the union of those $\mc A_x$'s for which $\mc A_x\cap A\neq\emptyset$. By the way the bipartite graph $\mc P$ has been defined, it follows that $N^{\mc P}(A)=N^{\mc P}({\bar A})$, so that it is sufficient to restrict condition (\ref{Hall theorem}) to subsets $A$ such that $\mc A_x\cap A\neq\emptyset$ yield $\mc A_x\subseteq A$. Given such an $A$, if we consider $D=\{x\;|\; \mc A_x\subseteq A\}$, we immediately obtain that (\ref{cond allocation exists}) coincides with (\ref{Hall theorem}).
\end{proof}

In general, it is not necessary to check the validity of (\ref{cond allocation exists}) for every subset $D$. We say that $D\subseteq \mc X$ is maximal if for any $D'\supsetneq D$, it holds $N(D')\supsetneq N(D)$. We say that $D_1, D_2\subseteq \mc X$ are independent if $N(D_1)\cap N(D_2)=\emptyset$ and $D\subseteq \mc X$ is called irreducible if it can not be decomposed into the union of two non empty independent subsets. Clearly, it is sufficient to verify (\ref{cond allocation exists}) for the subclass of maximal irreducible subsets.

\begin{example} If $\mc G$ is complete, we have that $N(\{x\})=\mc X\setminus\{x\}$ while $N(D)=\mc X$ for all $D$ such that $|D|\geq 2$. Hence, the only maximal irreducible subsets are the singletons $\{x\}$ and the set $\mc X$. Condition (\ref{cond allocation exists}) in this case reduces to
\beq\label{cond allocation exists complete}
\alpha_x\leq \sum\limits_{y\neq x}\beta_y,\; \forall x\in\mc X\qquad 
\sum\limits_{x\in\mc X}\alpha_x\leq \sum\limits_{y\in \mc X}\beta_y
\eeq
\end{example}
In general, the class of maximal irreducible subsets can be large and grow more than linearly in the size of $\mc X$, as the following example shows.
\begin{example} If $\mc G=(\mc X, \mc E)$ is a line graph ($\mc X=\{1,2,\dots , n\}$ and $\mc E=\{(i, i+1), \,i=1, \dots , n-1\}$) it can be checked that the maximal irreducible subsets are those of the form $\{i, i+2, \dots , i+2s\}$. 
\end{example}

In case when $\alpha_x=a$ and $\beta_y=b$ are both constant, something more can be said.
\begin{proposition}\label{prop allocation}  Given $(\mc G, \alpha, \beta)$, where $\mc G$ is regular, $\alpha_x=a$ and $\beta_y=b$ for all $x, y\in\mc X$, there exists an allocation iff $a\leq b$.
\end{proposition}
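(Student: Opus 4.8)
The plan is to invoke the Hall-type criterion of Theorem~\ref{theo allocation exists}: under the standing homogeneity assumptions ($\alpha_x\equiv a$, $\beta_y\equiv b$), condition (\ref{cond allocation exists}) reads $a\,|D|\le b\,|N(D)|$ for every $D\subseteq\mc X$, so everything reduces to comparing $|D|$ with $|N(D)|$ using regularity of $\mc G$. I read ``$\mc G$ regular'' as: there is a common value $d$ such that every vertex has in-degree and out-degree equal to $d$, and only the case $d\ge 1$ is of interest (for $d=0$ there are no edges and an allocation exists only when $a=0$).

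For the necessity of $a\le b$ I would argue directly, without using regularity: given any allocation $Q$, the construction preceding Theorem~\ref{theo allocation exists} already provides an injection $\tilde Q:\mc A\to\mc B$, whence $|\mc A|\le|\mc B|$, i.e.\ $\sum_{x}\alpha_x\le\sum_{y}\beta_y$; with all $\alpha_x=a$ and all $\beta_y=b$ this is $a|\mc X|\le b|\mc X|$, so $a\le b$. (Alternatively, one may simply take $D=\mc X$ in (\ref{cond allocation exists}), noting that in-regularity with $d\ge 1$ forces $N(\mc X)=\mc X$.)

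For sufficiency, assume $a\le b$. The key combinatorial point is that in a $d$-regular digraph $|N(D)|\ge|D|$ for every $D\subseteq\mc X$. I would prove this by double counting the edges of $\mc G$ with tail in $D$: out-regularity gives exactly $d|D|$ such edges, each of them has head in $N(D)$ by definition of $N(D)$, and in-regularity bounds by $d$ the number of them absorbed by any single vertex of $N(D)$; hence $d|D|\le d|N(D)|$ and, since $d\ge 1$, $|D|\le|N(D)|$. Consequently $a|D|\le a|N(D)|\le b|N(D)|$ for all $D$, condition (\ref{cond allocation exists}) holds, and Theorem~\ref{theo allocation exists} yields an allocation.

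The proof is short, and its only substantive ingredient is this double-counting bound $|N(D)|\ge|D|$ — the one place where regularity enters; the rest is just fixing the meaning of ``regular'' for a directed graph and noting the harmless degenerate case $d=0$.
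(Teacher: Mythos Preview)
Your proof is correct and follows the same approach as the paper: reduce to Theorem~\ref{theo allocation exists} and verify $|D|\le|N(D)|$ by double-counting the edges with tail in $D$ using regularity. You are in fact more thorough than the paper --- you prove necessity explicitly (the paper's proof only writes out the sufficiency direction) and you have the inequality oriented correctly (the paper's displayed line $d|D|=|\mc E_D|\ge d|N(D)|$ is a typo for $\le$).
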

\begin{proof} By Theorem \ref{theo allocation exists}, we simply have to show that $|D|\leq |N(D)|$ for every subset $D\subseteq \mc X$. Let $\mc E_{D}$ be the set of edges having one of the nodes in $D$. If $d$ is the degree of the nodes in $\mc G$, we have that
$$d|D|=|\mc E_{D}|\geq d|N(D)|$$
\end{proof}

From the practical point of view, the equivalence of our problem with a classical matching problem, is, however, of little utility, as the number of nodes of $\mc P$ is of the size $\sum\alpha_x+\sum\beta_y$ which will in general be very large. 

Moreover, in case allocations exist, we want to be able to construct one in a distributed way without the need of any supervision. The algorithm must be iterative in order to cope with possible time modifications of the units, of their interconnection and of their data storage needs and limitations. Also the possibility that units leave and enter the community must be considered. 

Also we want to have the possibility to find solution possessing certain extra features:
\begin{itemize}
\item {\bf (Reliability)} resources will often have different reliability properties and we want to give preference, in the allocation, to more reliable resources;
\item {\bf (Congestion)} equally reliable resources should be equally used, avoiding congestion phenomena;
\item {\bf (Aggregation)} users prefer to use as few resources as possible to allocate their back up data.
\end{itemize}
The reason for this last feature comes from the fact that an exceeding fragmentation of the back up data will cause a blow up in the number of communications among the units both in the storage and recover phases. This feature should be considered against another feature which in this paper is not going to be addressed, which is that of diversification of back ups: in real applications units will need to back up multiple copies of their data in order to cope with security and possible failure phenomena. In that case, these multiple copies will need to be stored in different units. This issue will be analyzed in a subsequent paper.

The above desired features may be contradictory in general and we want to have tunable parameters to make the algorithm converge towards a desired compromised solution. 

The proposed algorithm will be fully distributed: units will iteratively allocate and move their data among the neighbors on the basis of the space available and trying to maximize a utility function. There will be an underlying game theoretic structure inspired by the desired features described above. Our algorithm will be analyzed with the techniques of evolutionary game theory and it will be shown to yield a reversible Markov process converging to a Nash equilibrium of the game.

\section{The game theoretic set-up and the algorithm}\label{sec: algorithm}
Given a partial allocation $Q\in\mc Q_p$, consider the matrix $W(Q)\in\N^{\mc X\times \mc X}$ where $W(Q)_{xy}$ is the number of atomic data that $x$ has copied inside $y$ under the allocation $Q$, namely,
\begin{equation}\label{W}W(Q)_{xy}:=|Q^{-1}(y)\cap\mc A_x|\end{equation}
Clearly $W=W(Q)$ satisfies the following conditions
\begin{enumerate}
\item[(P1)]  $W_{xy}\geq 0$ for all $x,y$ and $W_{xy}=0$ whenever $(x,y)\not\in \mc E$.
\item[(P2)]  $W^x:=\sum\limits_{y\in\mc X}W_{xy}\leq\alpha_x$ for all $x\in\mc X$.
\item[(P3)] $W_y:=\sum\limits_{x\in\mc X}W_{xy}\leq \beta_y$ for all $y\in\mc X$.
\end{enumerate}
It is immediate to see that, conversely, if there exists $W$ satisfying these properties (such a $W$ is called a partial allocation state), then, from it, we can construct a partial allocation $Q$ such that $W=W(Q)$. Clearly, under this correspondence, we have that $Q\in \mc Q$ iff $W$ satisfies (P2) with equality for all $x\in\mc X$. In this case $W$ is called an allocation state.
The set of partial allocation states and the set of allocation states are denoted, respectively, with the symbols $\mc W_p$ and $\mc W$.

It is clear that two partial allocations $Q^1$ and $Q^2$ such that $W(Q^1)=W(Q^2)$, only differ for a permutation of the data atoms of the various units and for many purpouses can be considered  as equivalent. All the quantity of interest for the game theoretic setting will be defined at the level of $W$.

We are now ready to define the game theoretic model. We first define utilities: under a (possibly partial) allocation state $W$, the utility of a unit $x$ in using resource $y$ is given by
\beq\label{payoff units}
f_{xy}(W):=\lambda_y- k_cW_y/\beta_y+k_aW_{xy} 
\eeq
The first term $\lambda_y$ encodes possible reliability differences among resources, the second term is instead a congestion term which takes into consideration the level of use of the resource, and, finally, the third term depends on both $x$ and $y$ and pushes a unit to allocate in those resources where it has already allocated. $k_c, k_a$ are two non-negative parameters to tune the effect of the congestion and of the aggregation terms, respectively.

The choice of this particular utility function has been made on the basis of simplicity considerations (notice that the state $W$ enters linearly in it) and on the fact that, as exploited below, this leads to a potential game. In principle, different terms in the utility function can be introduced in order to make units to take into considerations other desired features (e.g multiple back up).

Define
\beq\label{potential discrete W} \Psi(W):=\sum\limits_{y\in\mc X}\sum\limits_{s=0}^{W_y}[\lambda_y- k_cs/\beta_y]+k_a\sum\limits_{x,y\in\mc X}\sum\limits_{s=0}^{W_{xy}}s\eeq
and notice that if $W, W'\in\mc W$ are such that $W'=W-e_{x\bar y}+e_{x\bar y'}$, then,
\beq\label{potential relation}\begin{array}{l}\Psi(W')-\Psi(W)\\[8pt]


=(\lambda_{\bar y'}- k_cW'_{\bar y'}/\beta_{\bar y'}+k_aW'_{\bar x\bar y'})-(\lambda_{\bar y}- k_cW_{\bar y}/\beta_{\bar y}+k_aW_{\bar x\bar y})\\[8pt]
=f_{x\bar y'}(W')-f_{x\bar y}(W)\end{array}\eeq
In other terms, under the state allocation $W$, when user $x$ moves a data atom from $\bar y$ to $\bar y'$, it experiences a variation in utility given by $\Psi(W')-\Psi(W)$. $\Psi$ is called a potential of the game.
Given $W\in\mc W_p$ and $x\in\mc X$, put 
$$\mc X^x(W):=\{y\in N_x\;|\; W_y<\beta_y\}$$
the set of resources still available for $x$ under the current allocation state $W$. 

An allocation state $W\in\mc W$ (and also any $Q\in\mc Q$ such that $W(Q)=W$) is called a Nash equilibrium if, for every $x\in\mc X$, for every $y\in N_x$ such that $W_{xy}>0$, for every $y'\in \mc X^x(W)$, it holds
$$f_{xy}(W)\geq f_{xy'}(W')$$
Maxima of $\Psi$ are clearly Nash equilibria while, in general, the converse is not true. Considering that $\Psi$ is defined on a finite set, a maximum, and thus a Nash equilibrium, always exists.
Under a Nash equilibrium, a unit whose goal is to maximize its utility, has no advantage in moving their allocated data, under the standing assumption that only one data atom at a time can be moved. Notice that data atoms are to be interpreted as aggregations of data and the decision of their size is part of the design of the algorithm. Clearly different levels of granularity will give rise to different game models including different Nash equilibria. 

For any $W\in\mc W_p$, we define the Gibbs probability distribution over $\mc X^x(W)$as
$$p_y(W,x)=\frac{e^{\gamma f_{xy}(W+e_{xy})}}
{Z_{\gamma}}$$
where $\gamma>0$ and where 
$$Z_{\gamma}=\sum_{y\in\mc X^x(W)}e^{\gamma f_{xy}(W+e_{xy})}$$ is a normalizing factor.

\subsection{The algorithm}

The algorithm we are proposing below is a distributed and asynchronous algorithm where units activate at random independent times and either allocate or move their atoms, undertaking a relaxed stochastic version of the utility maximization. 

The algorithm is mathematically described as a continuous time Markov process $Q_t:D_t\to \mc X$ on the set of partial allocations $\mc Q_p$.
Precisely, units are assumed to be equipped with independent internal Poisson clocks with possibly different clicking rates. We denote by $\nu_x$ the clicking rate of unit $x$. When a unit activates it can either allocate a further data atom (if allocation is not completed yet) or move a data atom from one resource to another. The choice of the resource where either allocate or move the data atom is done according to the Gibbs probability: this is a classical choice in evolutionary game theory and will be amenable to a fairly complete theoretical analysis.
The details of the algorithm are described below. We put $W(t)=W(Q_t)$.

\begin{enumerate}
\item Assume $\bar x$ activates at time $t$. With probabilities 
$$P_{\rm all}(W(t), \bar x),\quad  P_{\rm dis}(W(t),\bar x)$$ the resource $\bar x$ will make, respectively, an allocation or a distribution move as explained below. Of course we are assuming that 
$$\begin{array}{l}P_{\rm all}(W(t),\bar x)+P_{\rm dis}(W(t),\bar x)=1,\\[5pt]
P_{\rm all}(W(t),\bar x)=0 \; {\rm if}\; W(t)^{\bar x}=\alpha_{\bar x}\\[5pt]
P_{\rm dis}(W(t),\bar x)=0 \; {\rm if}\; W(t)^{\bar x}=0\end{array}$$

\item (ALLOCATION MOVE)
\begin{itemize}
\item Choose $(\bar x,\bar a)$ uniformly at random in $(\mc A\setminus D_t)\cap\mc A_{\bar x}$
\item  Choose $y^*$ according to the Gibbs probability $p_{y^*}(W(t), \bar x)$
\item Put $D_{t+}=D_t\cup\{(\bar x,\bar a)\},$ and $Q_{t^+}:D_{t+}\to\mc X$ by
$$\quad Q_{t^+}(x,a)=\left\{\begin{array}{ll} y^*\;\; &{\rm if}\,(x,a)=(\bar x,\bar a)\\
Q_t(x,a)\;\; &{\rm if}\, (x,a)\in D_t\end{array}\right.$$
\end{itemize}
\item (DISTRIBUTION MOVE)
\begin{itemize}
\item Choose $\bar y$ according to the probability 
$q_{\bar y}=W(t)_{\bar x \bar y}/W(t)^{\bar x}$.
\item Choose $(\bar x,\bar a)$ uniformly at random in $D_t\cap\mc A_{\bar x}$
\item Choose $y^*$ according to the Gibbs probability $p_{y^*}(W(t)-e_{\bar x\bar y}, \bar x)$
\item Put $D_{t+}=D_t,$
$$\quad  Q_{t^+}(x,a)=\left\{\begin{array}{ll} y^*\;\; &{\rm if}\,(x,a)=(\bar x,\bar a)\\
Q_t(x,a)\;\; &{\rm if}\, (x,a)\in D_t\setminus \{(\bar x,\bar a)\}\end{array}\right.$$
\end{itemize}
\end{enumerate}

The fact of using a noisy algorithm is crucial in our setting. In the following example we show a situation where a classical best response algorithm would remain stacked without completing the allocation.
\begin{example}\label{counterexample}
We are considering a line graph of four users as depicted below. \begin{center}
\begin{tikzpicture}
  [scale=1,auto=left,every node/.style={circle,draw=black!50,fill=black!20,scale=.7}]
  
  \node (n1) at (0,0) {};
  \node (n2) at (1,0)  {};
  \node (n3) at (2,0)  {};
  \node (n4) at (3,0)  {};

 \draw node[draw=white,fill=white,scale=.8] at (0,0.4) {$1$};
 \draw node[draw=white,fill=white,scale=.8] at (1,0.4) {$2$};
 \draw node[draw=white,fill=white,scale=.8] at (2,0.4) {$3$};
 \draw node[draw=white,fill=white,scale=.8] at (3,0.4) {$4$};

 \foreach \from/\to in
  {n1/n2,n2/n3, n3/n4}
    \draw (\from) -- (\to);

\end{tikzpicture}   
\end{center} 
\vspace{.2cm}
Each user $x$ has $\alpha_x=\beta_x=1$. Reliabilities are instead $\lambda_2=3$ while $\lambda_x=1$ for $x=1,3,4$. Assume we are in the partial allocation state $W\in\mc W_p$ given by
$$W:=\left(\begin{matrix}0& 0 & 0 &0\\ 1 &0 &0 &0\\ 0 &1 &0 &0\\ 0 &0 &1 &0\end{matrix}\right)$$ 
Clearly, this partial allocation state could be reached by the group zero allocation state with positive probability (the key point is that unit $3$ activates before unit $1$ and chooses the most reliable resource). It is also clear that under a best response strategy ($\gamma=+\infty$) this state allocation is an equilibrium: unit $1$ can not allocate, unless unit $3$ moves its data to $4$ but this will never happen because 
$f_{34}(W-e^{32})<f_{32}(W-e^{32})$.


\end{example}

\subsection{Theoretical results}
In this section we analyze the behavior of the algorithm introduced above. We will essentially show two results. First, we prove that if the set of allocation $\mc Q$ is not empty (i.e. condition (\ref{cond allocation exists}) is satisfied), the algorithm above will find one in bounded time with probability $1$. Second, we will show that, under a slightly stronger assumption than (\ref{cond allocation exists}), in the double limit $t\to +\infty$ and then $\gamma\to +\infty$, the Markov process induced by the algorithm will always converge, in law, to a Nash equilibrium which is a global maximum of the potential function $\Psi$. 

In order to prove such results, it will be necessary to go through a number of intermediate technical steps. First of all, it will be convenient to work directly with the process $W(t)=W(Q(t))$ which is also Markovian because of the way the transitions have been defined, considering that the results we are claiming, can all be expressed and established at this simpler level.

Given $W,W'\in\mc W_p$, we denote by $P_{W,W'}$ the transition probability from $W$ to $W'$ of the Markov chain underlying the process $W(t)$. $\mc L_p$ denotes the graph on $\mc W_p$ where an edge $(W,W')$ is present if and only if $P_{W,W'}>0$.

Our strategy will be to show that from any element $W\in\mc W_p$ there is a path in $\mc L_p$ to some element $W'\in \mc W$. This, by standard Markov chain arguments, leads to the result that allocation will be achieved in bounded time with probability $1$. After, we will show that $\mc L_p$ restricted to the set of allocations $\mc W$ is irreducible and this will then yield the asymptotic result.

First we consider the problem of finding a path to an allocation state. Given $W\in \mc W_p$ we define the following subsets of units 
$$\mc X^f(W):=\{x\in\mc X\;|\; W^x=\alpha_x\}\,$$ 
$$\mc X^{sat}(W):=\{x\in\mc X\setminus\mc X^f(W)\;|\; \not\exists y\in N_x\;\hbox{\rm s.t}\; W(Q)_y< \beta_y\}$$
Units in $\mc X^f(W)$ are called fully allocated: these units have completed the allocation of their data under the state $W$. Units in $\mc X^{sat}(W)$ are called saturated: they have not yet completed their allocation, however, under the current state $W$, they can not make any action, neither allocate, nor distribute. 
Finally, define
$$\mc W_p^{sat}:=\{W\in\mc W_p\setminus\mc W\;|\; \mc X=\mc X^f(W)\cup \mc X^{sat}(W)\}$$
It is clear that from any $W\in\mc W_p\setminus \mc W_p^{sat}$, some allocation move can be performed. Instead, if we are in a state $W\in \mc W_p^{sat}$, only possibly fully allocated units can make a distribution move. Notice that, because of condition (\ref{cond allocation exists}), for sure there exist resources $y$ such that $W_y<\beta_y$ and these resources are indeed exclusively connected to fully allocated units. The key point is to show that in a finite number of distribution moves it is always possible to move some data atoms from resources connected to saturated units to resources with available space: this will then make possible a further allocation move.

For any fixed $W\in W_p$, we can consider the following graph structure on $\mc X$ thought as set of resources: $\mc H_W=(\mc X,\mc E_W)$. Given $y_1,y_2\in\mc X$, there is an edge from $y_1$ to $y_2$  if and only if there exists $x\in\mc X$ for which
$$W_{xy_1}>0,\quad (x,y_2)\in\mc E$$
The edge from $y_1$ to $y_2$ will be indicated with the symbol $y_1\to_x y_2$ (to also recall the unit $x$ involved). The presence of the edge means that the two resources $y_1$ and $y_2$ are in the neighborhood of a common unit $x$ which is using $y_1$ under $W$. This indicates that $x$ can in principle move some of its data currently stored in $y_1$ into resource $y_2$ if this last one is available. We have the following technical result

\begin{lemma}\label{lemma equilibrium 1} Suppose $(\mc G ,\alpha, \beta)$  satisfies (\ref{cond allocation exists}). Fix $W\in W_p$ and let $\bar y\in\mc X$  be such that there exists $\bar x\in N_{\bar y}$ with $W^{\bar x}<\alpha_{\bar x}$.
Then, there exists a sequence
\beq\label{sequence} \bar y=y_0,\,x_0,\,y_1,\dots ,y_{t-1},\,x_{t-1},\,y_t\eeq
satisfying the following conditions
\begin{enumerate}
\item[(Sa)] Both families of the $y_k$'s  and of the $x_k$'s are each made of distinct elements;
\item[(Sb)] $y_k\to_{x_k} y_{k+1}$ for every $k=0,\dots ,t-1$;
\item[(Sc)] ${W}_{y_k}=\beta_{y_k}$ for every $k=0,\dots ,t-1$, and ${W(Q)}_{y_t}<\beta_{y_t}$.
\end{enumerate}
\end{lemma}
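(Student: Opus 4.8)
The plan is to prove the lemma by an augmenting-chain argument run against Hall's condition (\ref{cond allocation exists}), working entirely inside the resource graph $\mc H_W$. I begin with a trivial reduction: if $W_{\bar y}<\beta_{\bar y}$ one may take $t=0$ and the one-term sequence $y_0=\bar y$, for which (Sa) and (Sb) are vacuous and (Sc) asks only that $\bar y$ have spare capacity. So from now on assume $\bar y$ is full, $W_{\bar y}=\beta_{\bar y}$.

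Call a walk $\bar y=z_0\to_{u_0}z_1\to_{u_1}\cdots\to_{u_{k-1}}z_k$ in $\mc H_W$ \emph{admissible} if every vertex except possibly the last is full, i.e. $W_{z_j}=\beta_{z_j}$ for $j=0,\dots,k-1$, and let $R\subseteq\mc X$ be the set of resources that occur as the terminal vertex of some admissible walk (so $\bar y\in R$, via the length-$0$ walk). Suppose first that $R$ contains a resource $y$ with $W_y<\beta_y$. Then I pick an admissible walk from $\bar y$ to such a $y$ of minimal length and claim it automatically satisfies (Sa): a repeated vertex $z_i=z_j$ ($i<j$) could be spliced out, and a repeated label $u_i=u_j$ ($i<j$) could be short-circuited — because $W_{u_iz_i}>0$ together with $(u_i,z_{j+1})=(u_j,z_{j+1})\in\mc E$ exhibits the edge $z_i\to_{u_i}z_{j+1}$ — and either operation produces a strictly shorter admissible walk still ending at $y$, contradicting minimality. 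This minimal walk is then the desired sequence (\ref{sequence}): (Sb) holds by construction and (Sc) holds because in an admissible walk all non-terminal vertices are full while the terminal one, $y$, is not.

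It therefore remains to show that $R$ cannot consist only of full resources; this is the crux. Suppose it does. The key structural fact is that if $W_{xy}>0$ for some $y\in R$ then $N_x\subseteq R$: under the current assumption $y$ is full and is the terminal vertex of an admissible walk, so for every $y'\in N_x$ we may append the edge $y\to_xy'$, obtaining an admissible walk (the vertex $y$, now non-terminal, is full) that ends at $y'$. Hence, letting
\[
D:=\{\bar x\}\cup\{x\in\mc X\;|\;W_{xy}>0\text{ for some }y\in R\},
\]
every unit of $D$ other than $\bar x$ has its entire out-neighbourhood (hence all of its stored atoms) inside $R$; combined with $\bar y\in N_{\bar x}\cap R$ one identifies $N(D)=R$, and then, since every $y\in R$ is full,
\[
\sum_{y\in N(D)}\beta_y=\sum_{y\in R}W_y=\sum_{x\in\mc X}\sum_{y\in R}W_{xy}=\sum_{x\in D}W^x\le\sum_{x\in D}\alpha_x.
\]
Equality throughout would force $W^x=\alpha_x$ for all $x\in D$, contradicting $W^{\bar x}<\alpha_{\bar x}$; hence (\ref{cond allocation exists}) is violated by $D$ unless $R$ already contains a resource with spare capacity, which is what was needed.

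I expect the genuinely delicate point to be this last step, specifically the correct handling of the distinguished unit $\bar x$: one must set up the reachable region $R$ (hence $D$ and the identity $N(D)=R$) so that $\bar x$'s own stored atoms, if any, are counted inside $R$ and do not leak out of it — this is where the incidence of $\bar x$ with $\bar y$ and the bound $W^{\bar x}<\alpha_{\bar x}$ really get used, and pinning it down exactly is the only non-mechanical part. The reduction, the minimal-walk cleanup producing (Sa)/(Sb)/(Sc), and the final capacity count are all routine once $R$ and the notion of admissible walk are in place.
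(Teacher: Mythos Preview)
Your approach is essentially the paper's: define the set of resources reachable from $\bar y$ in $\mc H_W$ (your $R$, the paper's $\mc Y$), argue by contradiction that it must contain a non-full resource by applying condition (\ref{cond allocation exists}) to the units that have stored there, and extract the sequence as a shortest path. The minimal-walk cleanup producing (Sa)--(Sc) is handled identically.

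The gap is exactly where you flag it. Your identity $N(D)=R$ requires $N_{\bar x}\subseteq R$, but all you have is $\bar y\in N_{\bar x}\cap R$; the structural fact ``$W_{xy}>0$ for some $y\in R$ implies $N_x\subseteq R$'' applies to $\bar x$ only if $\bar x$ has itself stored an atom inside $R$, which nothing in the hypotheses guarantees (e.g.\ when $W^{\bar x}=0$). If $N_{\bar x}\not\subseteq R$ then both the first and the third equalities of your displayed chain can fail. The paper's own argument has the same blind spot in different dress: it works with $\mc Z$ (your $D\setminus\{\bar x\}$) and obtains the final strict inequality $\sum_{x\in\mc Z}W^x<\sum_{x\in\mc Z}\alpha_x$ ``from the existence of $\bar x$'', tacitly assuming $\bar x\in\mc Z$. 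A concrete instance showing the step cannot be repaired as stated: $\mc X=\{1,2,3,4\}$, edges $(1,3),(1,4),(2,3)$, $\alpha=(1,1,0,0)$, $\beta=(0,0,1,1)$, and $W$ with sole nonzero entry $W_{2,3}=1$; here (\ref{cond allocation exists}) holds, $\bar x=1$ and $\bar y=3$ satisfy the hypothesis, $R=\{3\}$ consists only of full resources, $N_{\bar x}=\{3,4\}\not\subseteq R$, and no sequence satisfying (Sa)--(Sc) can start at $\bar y=3$. So you have correctly located the only non-mechanical point, but neither your argument nor the paper's actually closes it.
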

\begin{proof}
Let $\mc Y\subseteq \mc X$ be the subset of nodes which can be reached from $\bar y$ in $\mc H_W$.
Preliminarily, we prove that there exists $y'\in\mc Y$ such that $W_{y'}<\beta_{y'}$.
Let
$$\mc Z:=\{x\in\mc X\;|\; \exists y\in\mc Y,\, W_{xy}>0\}$$
and notice that, by the way $\mc Y$ and $\mc Z$ have been defined,
\beq\label{condZ} x\in\mc Z,\; (x,y)\in \mc E\;\Rightarrow\; y\in\mc Y\eeq
Suppose now that, contrarily to the thesis, $W_y\geq \beta_y$ for all $y\in\mc Y$. Then, 
\beq\label{estim}\begin{array}{rcl}\sum\limits_{x\in \mc Z}\alpha_x &\leq& \sum\limits_{y\in \mc Y}\beta_y\\[16pt]
&=&\sum\limits_{y\in \mc Y}W_y\\[16pt]
&=&\sum\limits_{y\in \mc Y}\sum\limits_{x\in\mc Z}W_{xy}\\[12pt]
&=&\sum\limits_{x\in\mc Z}W^x\\[12pt]
&<&\sum\limits_{x\in \mc Z}\alpha_x 
\end{array}
\eeq
where the first inequality follows from (\ref{condZ}) and (\ref{cond allocation exists}), the first equality from the contradiction hypothesis, the second equality from the definition of $\mc Z$, the third equality again from (\ref{condZ}) and, finally, last inequality from the existence of $\bar x$. This is clearly absurd and thus proves our claim.

Consider now a path of minimal length from $\bar y$ to $\mc Y$ in $\mc H_W$:
$$\bar y=y_0\to_{x_0}y_1\to\dots \to y_{t-1}\to_{x_{t-1}}y_t$$
and notice that the sequence $\bar y=y_0,\,x_0,\,y_1,\dots ,y_{t-1},\,x_{t-1},\,y_t$ will automatically satisfy properties (Sa) to (Sc).
\end{proof}

We are now ready to prove the first main result.

\begin{theorem} \label{theo main 1} Assume that 
\begin{enumerate}
\item $\nu_x>0$ for every $x\in\mc X$ such that $\alpha_x>0$, 
\item $P_{\rm all}(W, \bar x)>0$ if $W^{\bar x}<\alpha_{\bar x}$, 
\item $(\mc G ,\alpha, \beta)$  satisfies (\ref{cond allocation exists}). 
\end{enumerate}
Then, with probability $1$, the Markov process $W(t)$ will be, after a finite number of jumps, in the set of allocations $\mc W$.
\end{theorem}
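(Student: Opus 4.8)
The plan is to reduce the statement to a combinatorial reachability fact about the graph $\mc L_p$ and then conclude by a routine finite‑state Markov chain argument. Put $|W|:=\sum_{x\in\mc X}W^x$; this is a nonnegative integer bounded above by $\sum_x\alpha_x$, it never decreases along a trajectory (an allocation move raises it by $1$, a distribution move leaves every $W^x$ — hence $|W|$ — unchanged), and $W\in\mc W$ precisely when $|W|=\sum_x\alpha_x$. The core step I would prove is: \emph{from every $W\in\mc W_p\setminus\mc W$ there is a path in $\mc L_p$, of length at most $|\mc X|$, reaching some $W'$ with $|W'|=|W|+1$.} Granting this, concatenating at most $\sum_x\alpha_x$ such paths yields, from any starting $W$, a path in $\mc L_p$ of length at most $N:=|\mc X|\sum_x\alpha_x$ that ends in $\mc W$; since $\mc W_p$ is finite there is a single $p>0$ lying below all strictly positive one‑step transition probabilities, so this path is followed with probability $\ge p^{N}>0$. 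As no move decreases $|W|$ and no allocation move is available from $\mc W$, the process stays in $\mc W$ once it enters it, whence $\P(W(t)\notin\mc W\text{ after }kN\text{ jumps})\le(1-p^{N})^{k}\to0$, which is the claim. Assumptions (1)–(2) together with the strict positivity of the Gibbs weights are exactly what turn "there is a path in $\mc L_p$" into "the corresponding transitions have positive probability".

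For the core step, distinguish two cases. If $W\in\mc W_p\setminus\mc W$ but $W\notin\mc W_p^{sat}$, then $\mc X\ne\mc X^f(W)\cup\mc X^{sat}(W)$, so some unit $x$ is neither fully allocated nor saturated: $W^x<\alpha_x$ (so $\alpha_x>0$, $\nu_x>0$) and there is $y\in N_x$ with $W_y<\beta_y$. With positive probability $x$ activates, selects an allocation move (assumption (2)) and places an atom into $y$ (positive Gibbs weight), raising $|W|$ by $1$ in one jump.

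The real work is the case $W\in\mc W_p^{sat}$. Pick $\bar x$ with $W^{\bar x}<\alpha_{\bar x}$; then $\alpha_{\bar x}>0$, so (\ref{cond allocation exists}) with $D=\{\bar x\}$ gives $N_{\bar x}\ne\emptyset$, and we fix $\bar y\in N_{\bar x}$, which has $W_{\bar y}=\beta_{\bar y}$ because $\bar x$ is saturated. Lemma \ref{lemma equilibrium 1}, applied at $\bar y$, yields a sequence $\bar y=y_0,x_0,y_1,\dots,x_{t-1},y_t$ with $y_k\to_{x_k}y_{k+1}$, $W_{y_k}=\beta_{y_k}$ for $k<t$, and $W_{y_t}<\beta_{y_t}$. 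I would traverse it from the right. Since $(x_{t-1},y_t)\in\mc E$ and $y_t$ has free space, in the saturated state $W$ the unit $x_{t-1}$ cannot be saturated, hence is fully allocated; thus $P_{\rm dis}(W,x_{t-1})=1$, and with positive probability $x_{t-1}$ performs the distribution move moving an atom out of $y_{t-1}$ into $y_t$ (the relevant $q$‑weight is positive since $W_{x_{t-1}y_{t-1}}>0$, the relevant Gibbs weight is positive since after removal $y_t\in N_{x_{t-1}}$ still has room). Now $y_{t-1}$ has free space while every $W^x$ is unchanged. Iterating, as long as the next unit $x_k$ in the list is fully allocated it makes the move $y_k\to y_{k+1}$ by the same reasoning — the room in $y_{k+1}$ having been opened one step earlier and, by the distinctness in (Sa), not re‑filled since. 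The first unit in the list $x_{t-1},x_{t-2},\dots,x_0,\bar x$ that is \emph{not} fully allocated — one exists, namely $\bar x$ — instead makes an \emph{allocation} move into the slot just opened ($y_{k+1}$, or $\bar y$ if the unit is $\bar x$): this is admissible by assumption (2) (its allocation count is still below $\alpha$, being unaffected by distribution moves) and the positivity of the Gibbs weights, and it raises $|W|$ by $1$. This uses at most $t\le|\mc X|-1$ distribution moves and then one allocation move.

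The main obstacle is exactly this positivity bookkeeping in the saturated case: the algorithm does not posit $P_{\rm dis}>0$ in general, so one cannot naively slide atoms backward along the sequence. It is overcome by the observation that distribution moves leave every $W^x$ — and therefore the split of $\mc X$ into fully‑allocated and saturated units — unchanged, so that the forced distribution moves fall only on fully‑allocated units (for which $P_{\rm dis}=1$), while the single $|W|$‑increasing allocation move falls on the first non‑fully‑allocated unit met, whose existence is secured by $\bar x$. A minor supporting point, handled by (Sa), is verifying that each freed slot $y_{k+1}$ is still free when $x_k$ is asked to use it.
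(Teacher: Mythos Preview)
Your argument is correct and follows the same route as the paper: both establish reachability in $\mc L_p$ from any $W\in\mc W_p\setminus\mc W$ to $\mc W$ by invoking Lemma~\ref{lemma equilibrium 1} and sliding atoms backward along the resulting sequence until an allocation move becomes available, the paper packaging this as a double induction on $(m_W,t_W)$ while you iterate explicitly and even make the $P_{\rm dis}$--positivity bookkeeping (via ``$x_{t-1}$ must be fully allocated'') more transparent than the paper does. One cosmetic correction: distribution moves do \emph{not} preserve the set of saturated units (freeing $y_{k}$ may un-saturate an in-neighbour of $y_k$), only the set $\mc X^f(W)$ of fully allocated units --- but since your argument only relies on the latter invariance, this misstatement does not affect correctness.
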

\begin{proof}
In order to prove the claim, it will be sufficient to show that from any $W\in\mc W_p$ there is a path in $\mc L_p$ (the graph underlying the possible  transitions of the process $W(t)$)  to some element $W'\in \mc W$. We will prove it by a double induction process. To this aim we consider two indices associated to any $W\in\mc W_p\setminus \mc W$. The first one is defined by
$$m_W=\sum\limits_{x\in\mc X}(\alpha_x-W^x)\geq 1$$
To define the second, consider any  $\bar x\in \mc X\setminus \mc X^f(W)$. We can apply Lemma \ref{lemma equilibrium 1} to $W$ and any $\bar y\in N_{\bar x}$ and obtain that we can find a sequence of agents 
$\bar y=y_0,\,x_0,\,y_1,\dots ,\,y_{t-1},\,x_{t-1},\,y_t$
satisfying the properties (Sa), (Sb), and (Sc) above. Among all the possible choices of $\bar x\in\mc X$, $\bar y\in\ N_{\bar x}$ and of the corresponding sequence, assume we have chosen the one minimizing $t$ and denote such minimal $t$ by $t_W$. The induction process will be performed with respect to the lexicographic order induced by the pair $(m_W, t_W)$. 

In the case when $t_W=0$, it means we can find $\bar x\in\mc X$, $\bar y\in\ N_{\bar x}$ such that $W_{\bar y}<\beta_{\bar y}$. Therefore, under the allocation state $W$, the unit $\bar x$ can allocate a further data atom to $\bar y$. Considering that the activation of the unit $\bar x$ has positive probability because of assumptions 1. and 2., this shows that $W$ is connected to a $W'$ such that $m_{W'}<m_W$. In case $m_W=1$, this means that $W'\in\mc W$. 

Consider now any $W\in\mc W_p\setminus \mc W$ such that $t=t_W>1$. Let $\bar x\in\mc X$, $\bar y\in\ N_{\bar x}$  and the sequence $\bar y=y_0,\,x_0,\,y_1,\dots ,\,y_{t-1},\,x_{t-1},\,y_t$
satisfying the properties (Sa), (Sb), and (Sc) above. In the allocation state $W$, the unit $x_{t-1}$, if activated (and this again has positive probability to happen because of assumptions 1. and 2.), can thus move an atomic piece of data from $y_{t-1}$ to $y_t$. The new allocation state is $W'=W-e_{x_{t-1}y_{t-1}}+e_{x_{t-1}y_t}$. Since $W'_{y_{t-1}}<\beta_{y_{t-1}}$, for sure $t_{W'}<t_W$. The induction argument is thus complete. 

\end{proof}

We are now left with studying the Markov process $W(t)$ on $\mc W$.  We start with the following
\begin{proposition}\label{prop time rev} $W(t)$, restricted to $\mc W$, is a time reversible Markov process. More precisely, it holds
\beq\label{time reversible}{\alpha\choose W}e^{\gamma \Psi(W)}P_{W,W'}= {\alpha\choose W'} e^{\gamma \Psi(W')}P_{W',W}\quad\forall W,W'\in\mc W
\eeq
where $${\alpha\choose W}:=\displaystyle\frac{\prod\limits_x\alpha_x!}{\prod\limits_{x,y}W_{xy}!}$$
\end{proposition}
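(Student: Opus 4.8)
The plan is to verify the detailed balance identity (\ref{time reversible}) directly, by writing down the one‑step transition rates of $W(t)$ on $\mc W$ and checking that their ratio equals the prescribed ratio of weights $\pi(W):={\alpha\choose W}e^{\gamma\Psi(W)}$; normalizing $\pi$ then produces a reversible (hence stationary) distribution. First I would note that on $\mc W$ every unit has $W^x=\alpha_x$, so $P_{\rm all}(W,\bar x)=0$ and only distribution moves occur; since a distribution move by $\bar x$ leaves $W^{\bar x}$ unchanged, $\mc W$ is invariant and the restricted process is a genuine continuous‑time Markov chain. A transition $W\to W'$ with $W'\ne W$ must have the form $W'=W-e_{\bar x\bar y}+e_{\bar x\bar y'}$ with $\bar y\ne\bar y'$, and $\bar x,\bar y,\bar y'$ are recovered uniquely from $W'-W$. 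Unwinding the algorithm (using $P_{\rm dis}(W,\bar x)=1$, $q_{\bar y}=W_{\bar x\bar y}/W^{\bar x}=W_{\bar x\bar y}/\alpha_{\bar x}$, and $(W-e_{\bar x\bar y})+e_{\bar x\bar y'}=W'$), the jump rate is
$$r_{W,W'}=\nu_{\bar x}\,\frac{W_{\bar x\bar y}}{\alpha_{\bar x}}\;p_{\bar y'}(W-e_{\bar x\bar y},\bar x)=\nu_{\bar x}\,\frac{W_{\bar x\bar y}}{\alpha_{\bar x}}\;\frac{e^{\gamma f_{\bar x\bar y'}(W')}}{Z_{\gamma}(W-e_{\bar x\bar y},\bar x)}.$$
Since the total rate at which some unit with positive $\alpha$ activates, $\sum_{x:\alpha_x>0}\nu_x$, does not depend on the state, $P_{W,W'}$ is $r_{W,W'}$ times a fixed constant, so it suffices to prove the identity for the $r_{W,W'}$.

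The crucial algebraic observation is that $W-e_{\bar x\bar y}=W'-e_{\bar x\bar y'}=:V$, so the Gibbs normalizer $Z_{\gamma}(V,\bar x)$ is common to the forward move $W\to W'$ and the backward move $W'\to W$ and cancels in the ratio; likewise the backward numerator is $e^{\gamma f_{\bar x\bar y}(W)}$ because $(W'-e_{\bar x\bar y'})+e_{\bar x\bar y}=W$. Therefore
$$\frac{r_{W,W'}}{r_{W',W}}=\frac{W_{\bar x\bar y}}{W'_{\bar x\bar y'}}\;e^{\gamma\left(f_{\bar x\bar y'}(W')-f_{\bar x\bar y}(W)\right)}.$$
Now I invoke the potential relation (\ref{potential relation}), which gives exactly $f_{\bar x\bar y'}(W')-f_{\bar x\bar y}(W)=\Psi(W')-\Psi(W)$. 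On the other side, $W$ and $W'$ differ only through $W'_{\bar x\bar y}=W_{\bar x\bar y}-1$ and $W'_{\bar x\bar y'}=W_{\bar x\bar y'}+1$, so a one‑line computation gives ${\alpha\choose W'}\big/{\alpha\choose W}=W_{\bar x\bar y}/(W_{\bar x\bar y'}+1)=W_{\bar x\bar y}/W'_{\bar x\bar y'}$. Substituting both facts yields $r_{W,W'}/r_{W',W}=\big({\alpha\choose W'}e^{\gamma\Psi(W')}\big)\big/\big({\alpha\choose W}e^{\gamma\Psi(W)}\big)$, which is precisely (\ref{time reversible}) after cross multiplication; the cases $W'=W$ and $P_{W,W'}=0$ are trivial.

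I expect the only real obstacle to be the bookkeeping that forces the two correction factors to match: the distribution‑move weight contributes $W_{\bar x\bar y}/\alpha_{\bar x}$ forward versus $W'_{\bar x\bar y'}/\alpha_{\bar x}$ backward, and this has to be cancelled exactly against the ratio of multinomial coefficients ${\alpha\choose\cdot}$ — which is what makes that combinatorial factor the correct one (morally, ${\alpha\choose W}$ counts the partial allocations $Q$ with $W(Q)=W$, i.e.\ the fibre over $W$, so reversibility on $\mc Q$ descends to $\mc W$ with this weight). Getting the identity $W-e_{\bar x\bar y}=W'-e_{\bar x\bar y'}$ in place is what kills the Gibbs partition functions, after which (\ref{potential relation}) finishes the argument for free. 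I would also state explicitly that $P_{W,W'}$ is read in the continuous‑time sense (equivalently, through the embedded jump chain with its state‑independent total rate), so that ``transition probability'' and ``jump rate'' are interchangeable up to one global constant here.
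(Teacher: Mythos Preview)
Your argument is correct and follows essentially the same route as the paper's proof: identify the only nontrivial transitions $W'=W-e_{\bar x\bar y}+e_{\bar x\bar y'}$, write down the two rates, and verify detailed balance via the potential relation (\ref{potential relation}). If anything you are more careful than the paper, which silently drops the Gibbs normalizer $Z_\gamma$ and leaves the multinomial ratio ${\alpha\choose W'}/{\alpha\choose W}=W_{\bar x\bar y}/W'_{\bar x\bar y'}$ to the reader; your explicit observation that $W-e_{\bar x\bar y}=W'-e_{\bar x\bar y'}$ is exactly what justifies both cancellations.
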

\begin{proof}
Notice that the only cases when $P_{W,W'}$ and $P_{W',W}$ are not both equal to $0$ is for those pairs $W, W'\in\mc W$ such that $W'=W-e_{x\bar y}+e_{x\bar y'}$ for some $x, \bar y, \bar y'\in\mc X$. Assume this to be the case. Then, it follows from the way the distribution moves of the algorithm have been defined that
$$P_{W,W'}=\nu_x\frac{W_{x\bar y}}{\alpha_x}e^{\gamma f_{x\bar y'}(W')},\; P_{W',W}=\nu_x\frac{W'_{x\bar y'}}{\alpha_x}e^{\gamma f_{x\bar y}(W)}$$
Substituting in (\ref{time reversible}) and using relation (\ref{potential relation}), it is immediate to check that equality holds.

\end{proof}

We now show that under a slight stronger assumption than (\ref{cond allocation exists}), namely,
\beq\label{cond W ergodic}
\sum\limits_{x\in A}\alpha_x< \sum\limits_{y\in N(A)}\beta_y\quad\forall A\subseteq \mc X\,,
\eeq
the process $W(t)$ restricted to $\mc W$ is ergodic. Denote by $\mc L$ the graph $\mc L_p$ restricted to the set $\mc W$. Notice that, as a consequence of time-reversibility, $\mc L$ is an undirected graph. Ergodicity is equivalent to proving that $\mc L$ is connected. We start with a lemma analogous to previous Lemma \ref{lemma equilibrium 1}.

\begin{lemma}\label{lemma equilibrium 2} Suppose $(\mc G ,\alpha, \beta)$  satisfies (\ref{cond W ergodic}) and let $W\in \mc W$. Then, for every $\bar y\in\mc X$, there exists a sequence (\ref{sequence})
satisfying the conditions (Sa), (Sb), and (Sc) as in Lemma \ref{lemma equilibrium 1}. 
\end{lemma}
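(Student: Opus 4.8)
\emph{Proof plan.} The plan is to run, essentially verbatim, the argument of Lemma~\ref{lemma equilibrium 1}, with one substitution: the strict inequality that there came from the existence of a not-yet-fully-allocated unit $\bar x$ will here be supplied by the strengthened hypothesis (\ref{cond W ergodic}) together with the fact that, since $W\in\mc W$, property (P2) holds with \emph{equality}, i.e. $W^x=\alpha_x$ for all $x\in\mc X$. So fix $\bar y\in\mc X$, let $\mc Y\subseteq\mc X$ be the set of nodes reachable from $\bar y$ in $\mc H_W$ (hence $\bar y\in\mc Y$), and reduce, exactly as before, to the single claim that there exists $y'\in\mc Y$ with $W_{y'}<\beta_{y'}$; from such a $y'$ a shortest path in $\mc H_W$ back to $\bar y$ will produce the sequence (\ref{sequence}).

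For the claim I would set $\mc Z:=\{x\in\mc X\;|\;\exists\,y\in\mc Y,\ W_{xy}>0\}$ and note, as in (\ref{condZ}), that $x\in\mc Z$, $(x,y)\in\mc E$ imply $y\in\mc Y$, so $N(\mc Z)\subseteq\mc Y$. If $\mc Z=\emptyset$ the resource $\bar y$ is unused, so $W_{\bar y}=0<\beta_{\bar y}$ and we take $y'=\bar y$ (resources with $\beta_y=0$ carry no data and may be discarded from the start). Otherwise, arguing by contradiction, suppose $W_y=\beta_y$ for every $y\in\mc Y$ — recall $W_y\le\beta_y$ always holds by (P3). Then
\beq
\sum_{x\in\mc Z}\alpha_x \;<\; \sum_{y\in N(\mc Z)}\beta_y \;\le\; \sum_{y\in\mc Y}\beta_y \;=\; \sum_{y\in\mc Y}W_y \;=\; \sum_{y\in\mc Y}\sum_{x\in\mc Z}W_{xy} \;=\; \sum_{x\in\mc Z}W^x \;=\; \sum_{x\in\mc Z}\alpha_x ,
\eeq
where the strict inequality is (\ref{cond W ergodic}) applied to the nonempty set $\mc Z$, the next step is $N(\mc Z)\subseteq\mc Y$, the following equality is the contradiction hypothesis, then the definition of $\mc Z$, then $N(\mc Z)\subseteq\mc Y$ once more, and the final equality is $W\in\mc W$. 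This absurdity proves the claim.

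It remains to extract the sequence. I would take a path of minimal length in $\mc H_W$ from $\bar y$ to the set $\{y\in\mc Y\;|\;W_y<\beta_y\}$, written $\bar y=y_0\to_{x_0}y_1\to\cdots\to_{x_{t-1}}y_t$; (Sb) is then immediate. Minimality forces $W_{y_k}=\beta_{y_k}$ for $k<t$ (otherwise the path could be stopped earlier) and $W_{y_t}<\beta_{y_t}$ by the choice of endpoint, giving (Sc); the $y_k$ are pairwise distinct since a minimal path repeats no vertex, and the $x_k$ are distinct too, for if $x_i=x_j$ with $i<j\le t-1$ then the edges $y_i\to_{x_i}y_{i+1}$ and $y_j\to_{x_i}y_{j+1}$ give $W_{x_i y_i}>0$ and $(x_i,y_{j+1})\in\mc E$, so $y_i\to_{x_i}y_{j+1}$ is an edge and $y_0,\dots,y_i,y_{j+1},\dots,y_t$ is a strictly shorter path — contradiction. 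This yields (Sa) and finishes the proof.

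Since this parallels Lemma~\ref{lemma equilibrium 1} so closely, I do not expect a genuine obstacle; the one point to get right is that the full-allocation hypothesis $W\in\mc W$ (equality in (P2)) is exactly what turns the strict assumption (\ref{cond W ergodic}) into the contradiction in the displayed chain — there is no longer an under-allocated unit to lean on, as there was in Lemma~\ref{lemma equilibrium 1} — together with the routine shortest-path bookkeeping needed to secure (Sa).
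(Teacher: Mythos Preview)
Your proof is correct and follows exactly the approach the paper intends: the paper's own proof simply says to rerun the argument of Lemma~\ref{lemma equilibrium 1}, noting that in the chain (\ref{estim}) the leading inequality becomes strict by (\ref{cond W ergodic}) while the trailing strict inequality becomes an equality because $W\in\mc W$ gives $W^x=\alpha_x$. You have spelled this out faithfully, and your extra care with the edge case $\mc Z=\emptyset$ and with verifying (Sa) for the $x_k$'s only adds detail the paper leaves implicit.
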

\begin{proof} It is sufficient to follow the steps of to the proof of Lemma \ref{lemma equilibrium 1} noticing that in (\ref{estim}) the first equality is now a strict inequality, while the last strict inequality becomes an equality. 
\end{proof}

If $W,W'\in\mc W$ are connected through a path in $\mc L$, we write that $W\sim W'$. Introduce the following distance on $\mc W$: if $W^1, W^2\in\mc W$
$$\delta(W^1,W^2)=\sum\limits_{x,y}|W^1_{xy}-W^2_{xy}|$$
A pair $(W^1, W^2)\in\mc W$ is said to be minimal if 
$$\delta (W^1,W^2)\leq \delta (W^{1'}, W^{2'})\;\;\forall W^{1'}\sim W^1,\; \forall W^{2'}\sim W^2$$
Notice that $\mc L$ is connected if and only if for any minimal pair $(W^{1}, W^{2})$, it holds $W^{1}=W^{2}$. 

\begin{lemma}\label{lemma minimal1} Let $(W^1, W^2)$ be a minimal pair.
Suppose $y\in\mc X$ is such that $W^1_y<\beta_y$. Then, $W^1_{xy}=W^2_{xy}$ for all $x\in\mc X$.
\end{lemma}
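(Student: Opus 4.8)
The plan is to argue by contradiction. Suppose there is a unit $x$ with $W^1_{xy}\neq W^2_{xy}$; I will then exhibit a single distribution move applied either to $W^1$ or to $W^2$ that strictly lowers $\delta$ between the two equivalence classes, contradicting minimality of the pair. On $\mc W$ the only available transitions are the distribution moves $W\mapsto W-e_{x\hat y}+e_{x\hat y'}$, so the whole argument reduces to checking, for a well-chosen such move, three things: that it lands again in $\mc W$, that it has positive transition probability (equivalently, gives an edge of $\mc L$), and that it changes $\delta$ by exactly $-2$.

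Before splitting into cases I would record the routine facts used throughout: since $W^1,W^2\in\mc W$, every row of each matrix sums to $\alpha_x$, so whenever $W^1_{xy}<W^2_{xy}$ there is a $y'\neq y$ with $W^1_{xy'}>W^2_{xy'}$ (and symmetrically), and any strictly positive entry of $W^i$ forces the corresponding edge to lie in $\mc E$. Now consider Case~A: some unit $x$ has $W^1_{xy}<W^2_{xy}$. Choose $y'\neq y$ with $W^1_{xy'}>W^2_{xy'}$ and set $W^{1'}=W^1-e_{xy'}+e_{xy}$. One checks $W^{1'}\in\mc W$, the point being that $W^1_y<\beta_y$ keeps the $y$-column within capacity; and $P_{W^1,W^{1'}}>0$, because in the distribution move of $x$ the resource $y'$ is selected with probability $W^1_{xy'}/\alpha_x>0$ and then $y$ receives positive Gibbs weight, since $y\in\mc X^x(W^1-e_{xy'})$ (this needs $(x,y)\in\mc E$ and $W^1_y<\beta_y$, both of which hold). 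Finally, as $W^1_{xy}-W^2_{xy}\le -1$ and $W^1_{xy'}-W^2_{xy'}\ge 1$, the move decreases two summands of $\delta$ by one each, so $\delta(W^{1'},W^2)=\delta(W^1,W^2)-2$, contradicting minimality.

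In Case~B every $x$ satisfies $W^1_{xy}\ge W^2_{xy}$, with strict inequality for at least one $x_0$. Summing over $x$ gives $W^2_y<W^1_y<\beta_y$, so $W^2_y<\beta_y$ as well, and now the mirror image of the previous construction applies on the $W^2$ side: pick $y'\neq y$ with $W^2_{x_0y'}>W^1_{x_0y'}$, put $W^{2'}=W^2-e_{x_0y'}+e_{x_0y}$, verify $W^{2'}\in\mc W$ and $P_{W^2,W^{2'}}>0$ using $W^2_y<\beta_y$, and conclude $\delta(W^1,W^{2'})=\delta(W^1,W^2)-2$, again a contradiction. Hence no such $x$ exists, i.e. $W^1_{xy}=W^2_{xy}$ for all $x$.

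The one genuinely delicate point is the necessity of Case~B: when all discrepancies in the $y$-column point the same way ($W^1>W^2$) there is in general no way to repair them by a move on $W^1$, since that would require spare capacity at some other resource, which is not available. The resolution is the observation that precisely this sign pattern forces $W^2_y<W^1_y<\beta_y$, so the correcting move can instead be carried out on $W^2$. Everything else is the bookkeeping of the three bullet conditions.
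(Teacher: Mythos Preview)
Your proof is correct and follows essentially the same contradiction argument as the paper: a discrepancy $W^1_{xy}<W^2_{xy}$ is repaired by a single distribution move on $W^1$ into the under-full resource $y$, and the opposite sign is handled after noting that it forces $W^2_y<\beta_y$ so the roles of $W^1$ and $W^2$ can be swapped. The only difference is cosmetic: the paper compresses your Case~B into the phrase ``exchanging the role of $W^1$ and $W^2$,'' whereas you spell it out and add the (welcome) explicit verifications that each move is a valid edge of $\mc L$.
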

\begin{proof}
Suppose by contradiction that $W^1_{xy}<W^2_{xy}$ for some $x\in\mc X$. Then, necessarily, there exists $y'\neq y$ such that $W^1_{xy'}>W^2_{xy'}$. Consider then $W^{1'}=W^1-e^{xy'}+e^{xy}$. Clearly, $\delta (W^{1'},W^2)<\delta(W^1,W^2)$ and this contradicts the minimality assumption. Thus $W^1_{xy}\geq W^2_{xy}$ for all $x\in\mc X$. This yields $W^2_y<\beta_y$. Exchanging the role of $W^1$ and $W^2$ we obtain the thesis.
\end{proof}

\begin{proposition}\label{prop connected} If condition (\ref{cond W ergodic}) holds true, the graph $\mc L$ is connected.
\end{proposition}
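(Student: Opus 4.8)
The plan is to use the characterization recalled right before the statement: $\mc L$ is connected iff every minimal pair $(W^1,W^2)$ has $W^1=W^2$. So I would fix a minimal pair, assume $W^1\neq W^2$, and exhibit a path in $\mc L$ from $W^1$ to some $\hat W^1$ with $\delta(\hat W^1,W^2)<\delta(W^1,W^2)$; taking $W^{2\prime}=W^2$ this contradicts minimality, hence $W^1=W^2$ and $\mc L$ is connected.

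Set $M:=W^1-W^2$. Every row sum of $M$ vanishes ($\sum_yM_{xy}=\alpha_x-\alpha_x=0$), and by Lemma \ref{lemma minimal1} applied to $W^1$ and to $W^2$, any column $y$ carrying a nonzero entry of $M$ satisfies $W^1_y=W^2_y=\beta_y$, so the corresponding column sum of $M$ vanishes too. From a nonzero matrix with vanishing row and column sums one extracts an \emph{alternating cycle}: introduce the digraph on $\mc X$ with an arc $y\to_x y'$ whenever $M_{xy}>0$ and $M_{xy'}<0$; starting at an entry with $M_{x_0y_0}>0$, the zero-sum conditions let a walk $y_0,x_0,y_1,x_1,\dots$ continue indefinitely, so a resource repeats and we get a cycle $y_0\to_{x_0}y_1\to_{x_1}\cdots\to_{x_{\ell-1}}y_\ell=y_0$ with $\ell\geq2$, distinct $y_k$, $W^1_{x_ky_k}\geq1$, $W^1_{x_ky_{k+1}}<W^2_{x_ky_{k+1}}$, and $(x_k,y_{k+1})\in\mc E$ (the latter since $W^2_{x_ky_{k+1}}>0$). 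The $2\ell$ positions $(x_k,y_k)$ and $(x_k,y_{k+1})$ are pairwise distinct (the sign pattern forbids coincidences), so $\hat W^1:=W^1+\sum_k(e_{x_ky_{k+1}}-e_{x_ky_k})$ lies again in $\mc W$ (row sums are preserved, the added correction has zero column sums) and satisfies $\delta(\hat W^1,W^2)=\delta(W^1,W^2)-2\ell$.

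It remains to join $W^1$ to $\hat W^1$ by elementary moves of $\mc L$, i.e. to realize this cyclic rotation as single-atom transfers. The difficulty is that every $y_k$ is saturated ($W^1_{y_k}=\beta_{y_k}$), so no transfer out of the cycle is admissible on its own; I would borrow one empty slot from outside. Arguing as in the proof of Lemma \ref{lemma equilibrium 2}, but with the set $\{y_0,\dots,y_{\ell-1}\}$ in place of a single $\bar y$ and using the strict inequality (\ref{cond W ergodic}), the set of resources reachable from the cycle in $\mc H_{W^1}$ must contain some $z$ with $W^1_z<\beta_z$; pick a minimal-length path $z_0\to_{u_0}z_1\to\cdots\to_{u_{s-1}}z_s$ in $\mc H_{W^1}$ with $z_0$ on the cycle and $W^1_{z_s}<\beta_{z_s}$, so $s\geq1$ and $z_1,\dots,z_{s-1}$ are distinct, saturated, and off the cycle. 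Relabelling the cycle so that $z_0=y_0$, I would run three blocks of moves: (i) for $j=s-1,\dots,0$ transfer one $u_j$-atom from $z_j$ to $z_{j+1}$ (this drags an empty slot from $z_s$ to $y_0$); (ii) for $k=\ell-1,\dots,0$ transfer one $x_k$-atom from $y_k$ to $y_{k+1}$ (this rotates the cycle, using at each step the slot just vacated); (iii) for $j=0,\dots,s-1$ transfer one $u_j$-atom from $z_{j+1}$ back to $z_j$ (this returns the slot to $z_s$). Blocks (i) and (iii) cancel entry by entry, so the composite net effect is exactly the cyclic rotation: thus $\hat W^1\sim W^1$, giving the strict decrease of $\delta$ and the contradiction.

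The hard part is the bookkeeping that makes every one of these $O(s+\ell)$ moves legal, i.e. that at each step the destination resource has a free slot and the source resource still stores an atom of the moving unit. Since the $z_j$ are distinct and off the cycle except for $z_0=y_0$, blocks (i) and (iii) touch the cycle only at $y_0$, and walking the occupancies around the two chains makes every transfer admissible — with one genuinely delicate case: if $u_0=x_0$ and $W^1_{x_0y_0}=1$, block (i) has already removed the unique $x_0$-atom from $y_0$, so the last move of block (ii) cannot be ``$x_0$ from $y_0$ to $y_1$''; one re-routes it as ``$x_0$ from $z_1$ to $y_1$'' (block (i) has just deposited an $x_0$-atom in $z_1$, and $(x_0,y_1)\in\mc E$) and drops the first move of block (iii), after which a short computation checks that the net effect is still the cyclic rotation.
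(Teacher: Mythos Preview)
Your argument is correct and takes a genuinely different route from the paper's. The paper also fixes a minimal pair $(W^1,W^2)$ but never builds an explicit path of moves in $\mc L$; instead, for each resource $y$ it picks a minimal-length sequence as in Lemma~\ref{lemma equilibrium 2} and runs an induction on that length $t$: at each step it performs the \emph{same} single transfer $y_{t-1}\to y_t$ on $W^1$ and (whenever the relevant entry is positive) on $W^2$ as well, obtaining a new pair $(W^{1'},W^{2'})$ which is again minimal (since $\delta$ does not increase and both states stay in their respective $\sim$-classes) and for which the sequence shortens. Since the column $y=y_0$ is untouched by this transfer, the inductive equality $W^{1'}_{xy}=W^{2'}_{xy}$ transfers back verbatim to $W^1,W^2$. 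This ``move both, keep the pair minimal'' trick replaces your entire rotation-plus-borrowed-slot construction with a single step and sidesteps all bookkeeping, including your delicate $u_0=x_0,\;W^1_{x_0y_0}=1$ case. Your approach, by contrast, is a transportation/matching style argument: extract an alternating cycle from $W^1-W^2$, then realize the cyclic shift by elementary moves after borrowing one empty slot from outside the cycle. It has the merit of producing an explicit path in $\mc L$ and of making transparent \emph{where} the strict inequality (\ref{cond W ergodic}) is spent (namely, to guarantee an external non-full resource reachable from the saturated cycle); the price is the step-by-step admissibility check, which you sketch but which is indeed routine once one notes that the $z_j$'s and the $u_j$'s are distinct (by path minimality) and that blocks (i)/(iii) touch the cycle only at $y_0$.
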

\begin{proof}
Let $(W^1, W^2)$ be any minimal pair. We will prove that $W^1$ and $W^2$ are necessarily identical. Consider any resource $y$. It follows from Lemma \ref{lemma equilibrium 2} that we can find a sequence $y=y_0,\,x_0,\,y_1\cdots ,\,y_{t-1},\,x_{t-1},\,y_t$ satisfying the same (Sa), (Sb), and (Sc) with respect to the state allocation $W^1$. Among all the possible sequences, choose one with $t$ minimal for given $y$. We will prove by induction on $t$ that $W^1_{xy}=W^2_{xy}$ for all $x\in\mc X$.

If $t=0$, it means that $W^1_{y}<\beta_{y}$. It then follows from Lemma \ref{lemma minimal1} that $W^1_{xy}=W^2_{xy}$ for all $x\in\mc X$. Suppose now that the claim has been proven for all minimal pairs $(W^1,W^2)$ and any $y\in\mc X$ for which $t< \bar t$ (w.r. to $W^1$) and assume that $y=y_0,\,x_0,\,y_1\cdots ,\,y_{\bar t-1},\,x_{\bar t-1},\,y_{\bar t}$ satisfyies the properties (Sa), (Sb), and (Sc) with respect to $W^1$. Notice that the unit $x_{\bar t-1}$ can move a data atom from resource $y_{\bar t-1}$ into resource $y_{\bar t}$ under the state allocation $W^1$ and obtain $W^{1'}=W^1-e^{x_{\bar t-1}y_{\bar t-1}}+e^{x_{\bar t-1}y_{\bar t}}$. Consider now $W^2$ and notice that Lemma \ref{lemma minimal1} yields $W^2_{x_{\bar t-1}y_{\bar t}}=W^1_{x_{\bar t-1}y_{\bar t}}<\beta_{y_{\bar t}}$. Define
$$W^{2'}=\left\{\begin{array}{ll}W^2\;&{\rm if}\, W^2_{x_{\bar t-1}y_{\bar t -1}}=0\\
W^2-e^{x_{\bar t-1}y_{\bar t-1}}+e^{x_{\bar t-1}y_{\bar t}}\;&{\rm if}\, W^2_{x_{\bar t-1}y_{\bar t -1}}>0\end{array}\right.$$
Clearly, $\delta(W^{1'},W^{2'})\leq \delta(W^{1},W^{2})$ and this implies that also $W^{1'},W^{2'}$ is a minimal pair. Notice that $y=y_0,\,x_0,\,y_1\cdots ,\,y_{\bar t-1}$ satisfies (Sa), (Sb), and (Sc) with respect to $W^{1'}$. Therefore, by the induction hypotheses, it follows that $W^{1'}_{xy}=W^{2'}_{xy}$ for all $x\in\mc X$. Since $W^1_{xy}=W^{1'}_{xy}$ and $W^2_{xy}=W^{2'}_{xy}$, result follows immediately.

\end{proof}

Propositions \ref{prop time rev} and \ref{prop connected} immediately yield the following final result.
\begin{corollary} \label{cor main 2} Assume that (\ref{cond W ergodic}) holds true. Then $W(t)$, restricted to $\mc W$, is an ergodic time reversible Markov process whose unique invariant probability measure is given by
$$\mu_{\gamma}(W)=\frac{{\alpha\choose W}e^{\gamma \Psi(W)}}{\sum\limits_{W'\in\mc W}{\alpha\choose W'}e^{\gamma \Psi(W')}}$$
\end{corollary}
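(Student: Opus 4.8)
The plan is to obtain the statement as an assembly of Propositions \ref{prop time rev} and \ref{prop connected} together with elementary finite‑state Markov chain theory, so I do not expect any genuine obstacle; the few points that deserve a line of care are isolated below.

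First I would check that $\mc W$ is invariant under the dynamics, so that it is legitimate to speak of the process ``restricted to $\mc W$''. Indeed, if $W(t)\in\mc W$ then every unit is fully allocated, $W(t)^{\bar x}=\alpha_{\bar x}$ for all $\bar x$, hence $P_{\rm all}(W(t),\bar x)=0$ and only distribution moves can occur; a distribution move replaces $W$ by $W-e_{\bar x\bar y}+e_{\bar x y^\ast}$, which still satisfies $W^{\bar x}=\alpha_{\bar x}$ for every $\bar x$. Thus once the process hits $\mc W$ it never leaves it. (By Theorem \ref{theo main 1}, and since (\ref{cond W ergodic}) implies (\ref{cond allocation exists}), $\mc W$ is reached after finitely many jumps with probability one, so the restricted process is a genuine object of interest.)

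Next, ergodicity. By Proposition \ref{prop connected}, condition (\ref{cond W ergodic}) makes the graph $\mc L$ --- the restriction of $\mc L_p$ to $\mc W$ --- connected; equivalently, any two states $W,W'\in\mc W$ are joined by a chain of positive‑rate transitions staying inside $\mc W$. Since $\mc W$ is finite, the restricted process is therefore an irreducible continuous‑time Markov process on a finite state space, hence ergodic: it admits a unique invariant probability measure, and its law converges to that measure from every initial state.

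It remains to identify the invariant measure and to deduce reversibility. Write $\nu(W):={\alpha\choose W}e^{\gamma\Psi(W)}$ and $Z:=\sum_{W'\in\mc W}\nu(W')$, so that the $\mu_\gamma$ in the statement is exactly $\nu/Z$. Proposition \ref{prop time rev} is precisely the detailed balance relation $\nu(W)P_{W,W'}=\nu(W')P_{W',W}$ for all $W,W'\in\mc W$; dividing by $Z$ gives $\mu_\gamma(W)P_{W,W'}=\mu_\gamma(W')P_{W',W}$. Summing over $W'\neq W$ yields $\mu_\gamma(W)\sum_{W'\neq W}P_{W,W'}=\sum_{W'\neq W}\mu_\gamma(W')P_{W',W}$, i.e.\ $\mu_\gamma$ is stationary for the generator; being a stationary law satisfying detailed balance, it makes the restricted process time reversible. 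Finally, by the uniqueness from the previous paragraph, $\mu_\gamma$ is \emph{the} invariant probability measure, which is the assertion. The only steps that are not completely automatic are the invariance of $\mc W$ and the passage from detailed balance to stationarity of the generator, both entirely routine.
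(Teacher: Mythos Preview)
Your proposal is correct and follows exactly the route the paper intends: the corollary is stated there as an immediate consequence of Propositions \ref{prop time rev} and \ref{prop connected}, with no further argument given. You have simply made explicit the routine Markov‑chain steps (invariance of $\mc W$, irreducibility $\Rightarrow$ ergodicity, detailed balance $\Rightarrow$ stationarity and reversibility, uniqueness) that the paper leaves implicit.
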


\medskip
{\bf Remark:} Notice that when $\gamma\to +\infty$, the invariant probability $\mu_{\gamma}$ converges to the probability $\mu_{\infty}$ concentrated on the set $\argmax_{W\in\mc W} \Psi(W)$ of state allocations maximizing the potential and given by, for $W\in\argmax_{W\in\mc W} \Psi(W)$, 
$$\mu_{\infty}(W)=\frac{{\alpha\choose W}}{\sum\limits_{W'\in\mc W}{\alpha\choose W'}}$$
Thus, if $\gamma$ is small, the distribution of the process $W(t)$ for $t$ sufficiently large will be close to Nash equilibria.


In this paper we will assume that $\nu_x=\nu \alpha_x$ for some $\nu>0$, namely that units activation rates is proportional to the amount of data they need to back up.
We will consider two possibilities for the allocation and distribution probabilities $P_{\rm all}(W(t),x)$ and $P_{\rm dis}(W(t),x)$:
$$\begin{array}{lcl}
P_{\rm all}(W(t),x)&=&\frac{\alpha_x-W(t)^x}{\alpha_x},\\[10pt]
P_{\rm all}(W(t),x)&=&\left\{\begin{array}{ll} 1\;&{\rm if}\, W(t)^x<\alpha_x\\ 0\; &{\rm otherwise}\end{array}\right. \\[10pt]
P_{\rm dis}(W(t),x)&=&\frac{W(t)^x}{\alpha_x},\\[10pt]
P_{\rm dis}(W(t),x)&=&\left\{\begin{array}{ll} 0\;&{\rm if}\, W(t)^x<\alpha_x\\ 1\; &{\rm otherwise}\end{array}\right.
\end{array}$$

In the first case, the probability of an allocation move is proportional to the amount yet to be allocated while in the second case, no distribution move takes place before full allocation is reached.

\section{Simulation}\label{sec: simulations}
In this section we present a number of numerical simulations that validate the theoretical results and show the performance of the algorithm in terms of various parameters describing the speed of convergence, resources congestions, global utility, and complexity of the interconnections.

For the sake of readability we gather below the standing assumptions and parameters we have used.
\begin{itemize}
\item 
The number of units is denoted by $n$ and assumed to be even. Most of our simulations are for $n=50$ but we have also studied scalability issues by considering $n=100$ and $n=1000$.
\item We have considered two possible interconnection topologies: the complete graph and a random regular graph of degree $10$. 
\item Time has been assumed to be discrete, assuming that at every time instant a unit $x$ is chosen with a probability $\nu_x$ proportional to $\alpha_x$. Moreover, allocation and distribution moves are chosen acceding to the probabilities:
$$P_{\rm all}(W(t),x)=\left\{\begin{array}{ll} 1\;&{\rm if}\hspace{.2cm} W(t)^x<\alpha_x\\ 0\; &{\rm otherwise}\end{array}\right.
$$
\item Time horizon has been fixed to be $T=2\sum\alpha_x$ so that we have allowed up to two moves per data atom (one allocation and one possible distribution). It turns out that in all experiments carried on such time horizon has been sufficient for completing the allocation of all data atoms and also getting very close to a Nash equilibrium.  Denote by $W^{\infty}$ the final allocation state of the system after time $T$ has elapsed. 

\item
The parameter $\gamma$ appearing in the Gibbs distribution have been chosen to be time-varying with $$\gamma(t)=\gamma(t-1)+\frac{1}{\lambda_{max}*100}$$ where $\lambda_{max}$ is the maximum reliability of the resources. This is a typical choice done on such best response dynamics, even if the theoretical result expressed below can not directly be applied to insure convergence. In real world applications, where adaptation to a time-varying scenario (e.g. addition or deletion of units, change in the topology) is needed, $\gamma$ must be instead kept bounded. 
\item We have assumed units to have the same free space to offer $\beta = 50$ and to have possibly different amounts $\alpha_x$ to allocate. 
\item We have assumed units to split into two subsets of equal size $\mc X_1$ and $\mc X_2$ characterized by two different reliability levels, respectively,  $\lambda_1=0.5$ and $\lambda_2=0.8$. 
\item The congestion parameter is chosen to be $k_c=1$ while we will consider different values for $k_a$.
\end{itemize}

Moving data from one resource to another one is an expensive task which must be carefully monitored in real applications. To this aim, we have introduced the index $\nu_{moves}$ which computes the number of allocation or distribution moves per piece of data throughout the dynamics. In formula, if $m_i$ is the total number of moves performed by agent $i$ during the run of the algorithm, we put
$$\nu_{moves}=\frac{1}{n}\sum\limits_{x\in\mc X}\frac{m_i}{\alpha_i}.$$

The global utility of the system in the allocation state $W$ is defined as
$F(W)=\sum_{x,y\in\mc X} W_{xy}f_{xy}(W)$. 
Put
$F^*=\max_{W\in\mc W}F(W)$. 
$$\rho :=\frac{F(W^{\infty})}{F^*}$$
measures how the solution found is performing with regards to the global utility (notice that the potential $\Psi$ does not coincide with $F$ so that $W^{\infty}$  is not a-priori a maximum of $F$).

Units give preference to the most trusted resource in $\mc X_2$, however, depending on the amount of data they need to allocate and on the type of resources in their neighborhood, they need to use also less trusted resources in $\mc X_1$. We define the mean and the variance of the satisfaction level as
$$\bar\Lambda:=\frac{1}{n}\sum_{x\in\mc X}\sum_{y\in N_x}\frac{W^{\infty}_{xy}}{\alpha_x}\lambda_y,$$
$$\Lambda_{var}:=\frac{1}{n}\sum_{x\in\mc X}\left(\sum_{y\in\mc X}\frac{W^{\infty}_{xy}}{\alpha_x}\lambda_y-\bar\Lambda\right)^2$$
If $\lambda_1$ and $\lambda_2$ are taken to be the probability that if contacted at a random time the resource is available to give access to the stored data, $\bar\Lambda$ can be interpreted as the probability that a piece of data can be recovered when requested at some random time. 

The presence of the congestion term in the utility function should insure that all resources with the same $\lambda$ should in principle be used equally. We measure the mean and variance of the congestion level of resources in $\mc X_i$ by
$$\bar C^i:=\frac{1}{n\beta }\sum_{y\in\mc X_i}W^{\infty}_y,$$
$$C_{var}^i:=\frac{1}{n }\sum_{y\in\mc X_i}\left(\beta^{-1}W^{\infty}_y-\bar C\right)^2$$

Finally, we consider the in and out mean degrees measuring the topological complexity of the subgraph consisting of the edges $(x,y)$ for which $W_{xy}^{\infty}>0$:
$$d^+:=\frac{1}{n}\sum_{x\in\mc X}\sum_{y\in\mc X}\1_{\{W_{xy}>0\}},$$
$$d^-_i:=\frac{2}{n}\sum_{x\in\mc X}\sum_{y\in\mc X_i}\1_{\{W_{xy}>0\}},\; i=1,2$$

Below we present four sets of examples. The first two are with $n=50$ units, $\alpha_x$ constant, and different topologies. The third one deals instead with a situation where units have different $\alpha_x$. Finally, the fourth example deals with a larger number of users.


\begin{example}
Consider a case with $n=50$ users and $\alpha_x=\alpha=45$ for every unit $x$. It follows from Proposition \ref{prop allocation} that allocation is possible in any regular graph. Also notice that Theorem \ref{theo main 1} can be applied.
We first assume that the graph is complete and then we analyze the case of a regular graph of degree $10$.

The following table shows the value of the various performance parameters varying the aggregation term $k_a$.

\begin{center}
\begin{table}[h]
\caption{Complete Graph}
\centering
\begin{tabular}{|l|c|c|c|}
\hline
&$k_a=0$ & $k_a=0.25$ & $k_a=0.45$ \\
\hline
$\nu_{moves}$&   1.6271   &  1.3068    &    1.2548      \\
$\bar \Lambda$ &  0.6667   &   0.6592   &    0.6593  \\
$\Lambda_{var}$ &    $6.4818*10^{-4}$  &  0.0119    &   0.0122   \\
$\bar C^1$ &   0.8000  &   0.8450  &    0.8442  \\
$C_{var}^1$ &    $9.5680*10^{-4}$  &   0.1149   &   0.1195  \\
$\bar C^2$ &  1   &  0.9550    &   0.9558   \\
$C_{var}^2$ &  0    &   0.0280   &  0.0261   \\
$d^+$ &   44.8460    &    9.5420   &  9.6720   \\
$d^-_1$ &  43.9280    &   9.1720   & 9.1280    \\
$d^-_2$ &   45.7640   &   9.9120   &  10.2160   \\
$\rho$ & 0.9787 & 0.6812 & 0.6796 \\
\hline
    
\end{tabular}
\end{table}
\end{center}

The fact that, on average, each user makes less than two moves per each atom, suggests that at the end of the allocation the system is already near to the equilibrium. 

Notice the effect of the aggregation term on this parameter: when $k_a=0$, the parameter $\nu_{moves}$ is considerably higher as agents tend to move their data to leverage the congestion level over all the available resources. When the aggregation term is present this phenomenon is much reduced as agents have an incentive to keep data atoms together. The congestion indices confirm that agents preferably allocate over more trusted resources and that resources are used sufficiently equally. We can also see that the aggregation term brings to a slightly greater usage of worse resources.
The parameters where the influence of the aggregation term is even clearer, are the degrees: in the first case almost every edge is used while in the second two cases the number is sensibly lower.
Finally, concerning the utility parameter ratio $\rho$, notice that our algorithm performs very well for $k_a=0$ while we witness a certain degradation when $k_a>0$.  

\end{example}

\begin{example}
Consider now the same case with 50 users where the underlying graph is regular with degree $10$. As in the previous case, the table shows the parameters in the case with $k_a=0$, $k_a=0.25$ and $k_a=0.45$.

\begin{center}
\begin{table}[h]{\label{deg10_45}}
\caption{Regular Degree 10}
\centering
\begin{tabular}{|l|c|c|c|}
\hline
&$k_a=0$ & $k_a=0.25$ & $k_a=0.45$ \\
\hline
$\nu_{moves}$& 1.4187     &    1.2185   &    1.1714      \\
$\bar \Lambda$ & 0.6667  & 0.6596     &      0.6606    \\
$\Lambda_{var}$ &  0.0019    &  0.0136    &      0.0143     \\
$\bar C^1$ &  0.8000  &  0.8422    &    0.8364      \\
$C_{var}^1$ &  0.0011   &  0.1214    &   0.1350       \\
$\bar C^2$ & 1  &   0.9578   &    0.9636      \\
$C_{var}^2$ &  0  &   0.0261   &     0.0251     \\
$d^+$ & 9.9560   &  6.2580    &   6.3700       \\
$d^-_1$ & 9.9240 &  5.9400    &    6.2520      \\
$d^-_2$ &  9.9880  & 6.5760     &      6.4880    \\
$\rho$ & 0.9784 & 0.8872 & 0.9297\\
\hline 
\end{tabular}
\end{table}
\end{center}
These data confirm the good performance properties of the algorithm even when the graph of connection is of considerably less complexity. Notice how, when the aggregation term is present, the number of resources used per agent is something above $6$ to be compared to the previous case where it was almost $10$. In other terms, imposing a constrained communication pattern, it does not degrade performance and helps keeping complexity at a lower level.


Figures $1$ and $2$ show the difference between the graph we impose and the graph with edges that are used by the users in the case with $k_a=0.25$.

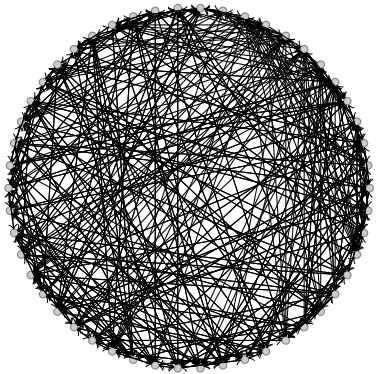
\begin{figure}[h]\label{grafo_un}
\centering
\begin{tikzpicture}
  [scale=2.4,auto=left,every node/.style={circle,draw=black!50,scale=.3,fill=black!20}]
  
  \node (n1) at (0.9921,0.1253){};  
  \node (n2) at (0.9686,0.2487){}; 
  \node (n3) at (0.9298,0.3681){}; 
  \node (n4) at (0.8763,0.4818){}; 
  \node (n5) at (0.8090,0.5878){}; \node (n6) at (0.7290,0.6845){}; \node (n7) at (0.6374,0.7705){}; \node (n8) at (0.5358,0.8443){}; \node (n9) at (0.4258,0.9048){}; \node (n10) at (0.3090,0.9511){}; \node (n11) at (0.1874,0.9823){}; \node (n12) at (0.0628,0.9980){}; \node (n13) at (-0.0628,0.9980){}; \node (n14) at (-0.1874,0.9823){}; \node (n15) at (-0.3090,0.9511){}; \node (n16) at (-0.4258,0.9048){}; \node (n17) at (-0.5358,0.8443){}; \node (n18) at (-0.6374,0.7705){}; \node (n19) at (-0.7290,0.6845){}; \node (n20) at (-0.8090,0.5878){}; \node (n21) at (-0.8763,0.4818){}; \node (n22) at (-0.9298,0.3681){}; \node (n23) at (-0.9686,0.2487){}; \node (n24) at (-0.9921,0.1253){}; \node (n25) at (-1.0000,-0.0000){}; \node (n26) at (-0.9921,-0.1253){}; \node (n27) at (-0.9686,-0.2487){}; \node (n28) at (-0.9298,-0.3681){}; \node (n29) at (-0.8763,-0.4818){}; \node (n30) at (-0.8090,-0.5878){}; \node (n31) at (-0.7290,-0.6845){}; \node (n32) at (-0.6374,-0.7705){}; \node (n33) at (-0.5358,-0.8443){}; \node (n34) at (-0.4258,-0.9048){}; \node (n35) at (-0.3090,-0.9511){}; \node (n36) at (-0.1874,-0.9823){}; \node (n37) at (-0.0628,-0.9980){}; \node (n38) at (0.0628,-0.9980){}; \node (n39) at (0.1874,-0.9823){}; \node (n40) at (0.3090,-0.9511){}; \node (n41) at (0.4258,-0.9048){}; \node (n42) at (0.5358,-0.8443){}; \node (n43) at (0.6374,-0.7705){}; \node (n44) at (0.7290,-0.6845){}; \node (n45) at (0.8090,-0.5878){}; \node (n46) at (0.8763,-0.4818){}; \node (n47) at (0.9298,-0.3681){}; \node (n48) at (0.9686,-0.2487){}; \node (n49) at (0.9921,-0.1253){}; \node (n50) at (1.0000,0.0000){};

 \foreach \from/\to in
{n1/n4,n1/n11,n1/n15,n1/n16,n1/n18,n1/n23,n1/n32,n1/n33,n1/n40,n1/n46,
n2/n3,n2/n7,n2/n19,n2/n21,n2/n26,n2/n29,n2/n35,n2/n38,n2/n40,n2/n50,
n3/n2,n3/n8,n3/n26,n3/n31,n3/n33,   n3/n38,   n3/n45,   n3/n47,   n3/n48,   n3/n49,
n4/n1,   n4/n5,   n4/n8, n4/n11, n4/n23, n4/n24, n4/n25, n4/n27, n4/n33,n4/n41,
     n5/n4, n5/n14,  n5/n16,  n5/n19,  n5/n24,  n5/n25,  n5/n39,  n5/n43,  n5/n48,  n5/n50,
    n5/n17, n5/n24,  n5/n28,  n5/n32,  n5/n37,  n5/n38,  n5/n41,  n5/n43,  n5/n45,  n5/n47,
     n6/n2,n6/n10,  n6/n16,   n6/n23,    n6/n28,     n6/n29,     n6/n43,     n6/n44,     n6/n45,     n6/n47,
      n7/n3,     n7/n4,n7/n13,    n7/n30,    n7/n31,    n7/n32,    n7/n34,    n7/n39, n7/n42,    n7/n47,
    n8/n11,   n8/n12,n8/n16,    n8/n18,    n8/n20,    n8/n22,    n8/n27,    n8/n29, n8/n30,    n8/n34,
     n9/n7, n9/n12,n9/n25,  n9/n36,  n9/n42,  n9/n44,  n9/n46,  n9/n47,    n9/n48, n9/n49,
  n10/n1,   n10/n4,   n10/n9,    n10/n12,   n10/n20,    n10/n22, n10/n28,    n10/n29,    n10/n34,   n10/n50,
  n12/n9, n12/n10,  n12/n11,n12/n15,    n12/n17,    n12/n22,    n12/n26,    n12/n28,    n12/n31,    n12/n33,
 n13/n8,n13/n18, n13/n20,   n13/n31,   n13/n32,   n13/n37,   n13/n42,   n13/n44,   n13/n46,   n13/n47,
n14/n5,    n14/n16,n14/n18,  n14/n19,  n14/n25,  n14/n26,  n14/n28,  n14/n36,  n14/n41,  n14/n48,
 n15/n1,   n15/n12,    n15/n16, n15/n21, n15/n22, n15/n31, n15/n34, n15/n38, n15/n39, n15/n41,
 n16/n1,   n16/n5,   n16/n7,     n16/n9,    n16/n14,    n16/n15, n16/n18,   n16/n35,    n16/n39,    n16/n49, n17/n6, n17/n12,  n17/n18,    n17/n24,    n17/n31,    n17/n34,   n17/n35, n17/n36,    n17/n37,   n17/n48,     n18/n1, n18/n9,n18/n13,  n18/n14,  n18/n16,  n18/n17, n18/n25,  n18/n35,    n18/n39, n18/n42,
   n19/n2,      n19/n5, n19/n14,     n19/n20,  n19/n22,  n19/n30,  n19/n37,    n19/n38,   n19/n41,    n19/n44,
      n20/n9,    n20/n11,    n20/n13,    n20/n19,n20/n25,n20/n26,   n20/n44,   n20/n46,    n20/n48,   n20/n49,
     n21/n2,  n21/n15,  n21/n23,  n21/n28,   n21/n37,  n21/n38,   n21/n40,   n21/n45,  n21/n47,   n21/n50,
     n22/n9, n22/n11, n22/n12, n22/n15,n22/n19, n22/n24,    n22/n36,    n22/n41, n22/n44,    n22/n49,
     n23/n1, n23/n4,n23/n7,    n23/n21,   n23/n30,    n23/n31,   n23/n39,   n23/n40,    n23/n46,   n23/n50,
     n24/n4,n24/n5,n24/n6,   n24/n17,    n24/n22,   n24/n30,    n24/n33,    n24/n36,   n24/n39,    n24/n4,
     n25/n4,n25/n5,    n25/n10,   n25/n14,n25/n18,   n25/n20, n25/n26, n25/n27,   n25/n35, n25/n40,
     n26/n2,n26/n3,   n26/n12,  n26/n14,n26/n20,  n26/n25,  n26/n35,  n26/n36,  n26/n38,  n26/n45,
     n27/n4,n27/n9,  n27/n25, n27/n30,n27/n32, n27/n35,   n27/n38,   n27/n42, n27/n44,   n27/n46,
     n28/n6,n28/n7, n28/n11,n28/n12, n28/n14,n28/n21,    n28/n35,    n28/n40,n28/n45,    n28/n46,
     n29/n2,n29/n7,   n29/n9, n29/n11,     n29/n32,   n29/n34,n29/n36, n29/n43,  n29/n45, n29/n49,
     n30/n8,n30/n9, n30/n19, n30/n23,  n30/n24,     n30/n27,   n30/n35,     n30/n40,     n30/n41,     n30/n43,
      n31/n3,    n31/n8,n31/n12,  n31/n13,    n31/n15,    n31/n17,   n31/n23,n31/n41, n31/n48, n31/n50,
     n32/n1,  n32/n6,n32/n8,    n32/n13,   n32/n27, n32/n29,  n32/n33,    n32/n39,n32/n49,n32/n50,
     n33/n1, n33/n3,n33/n4,   n33/n12,    n33/n24,n33/n32,   n33/n34,   n33/n37,n33/n42,n33/n45,
     n34/n8,n34/n9,n34/n11,   n34/n15,n34/n17,n34/n29,n34/n33,   n34/n36, n34/n39, n34/n43,
     n35/n2,    n35/n16,  n35/n17,   n35/n18,  n35/n25,n35/n26,  n35/n27,   n35/n28,   n35/n30,   n35/n44,
    n36/n10,    n36/n14,    n36/n17,   n36/n22,    n36/n24,n36/n26,   n36/n29,  n36/n34, n36/n37,    n36/n41,
     n37/n6,  n37/n13,    n37/n17, n37/n19,    n37/n21,    n37/n33,    n37/n36,n37/n40, n37/n46,    n37/n48,
n38/n2, n38/n3,n38/n6,n38/n15,n38/n19, n38/n21,n38/n26, n38/n27,n38/n43,n38/n44,
     n39/n5,n39/n8, n39/n15,  n39/n16,n39/n18,    n39/n23,n39/n24,   n39/n32,    n39/n34,    n39/n50,
     n40/n1,n40/n2,n40/n21,n40/n23,n40/n25,n40/n28,n40/n30,n40/n37,n40/n41,n40/n50,
     n41/n4,     n41/n6,    n41/n14,   n41/n15,    n41/n19,    n41/n22,n41/n30,n41/n31,n41/n36,n41/n40,
     n42/n8,n42/n10,n42/n13,n42/n18,n42/n27,n42/n33,n42/n45,n42/n46,n42/n47,n42/n48,
     n43/n5,n43/n6,n43/n7,n43/n24,n43/n29,n43/n30,n43/n34,n43/n38,n43/n45,n43/n49, 
     n44/n7,n44/n10,n44/n13,n44/n19,n44/n20,n44/n22,n44/n27,n44/n35,n44/n38,n44/n49,
     n45/n3, n45/n6, n45/n7, n45/n21, n45/n26, n45/n28, n45/n29, n45/n33, n45/n42, n45/n43, 
     n46/n1, n46/n10,n46/n13,n46/n20,n46/n23,n46/n27,n46/n28,n46/n37,n46/n42,n46/n49,
     n47/n3, n47/n6, n47/n7, n47/n8, n47/n10, n47/n13, n47/n21, n47/n42, n47/n48, n47/n50, 
     n48/n3, n48/n5, n48/n10, n48/n14, n48/n17, n48/n20, n48/n31, n48/n37, n48/n42, n48/n47, 
     n49/n3, n49/n10, n49/n16, n49/n20, n49/n22, n49/n29, n49/n32, n49/n43, n49/n44, n49/n46, 
     n50/n2, n50/n5, n50/n11, n50/n21, n50/n23, n50/n31, n50/n32, n50/n39, n50/n40, n50/n47}
     \draw [-to] (\from) -- (\to);

\end{tikzpicture}   
\caption{Underlying Network}
\end{figure} 

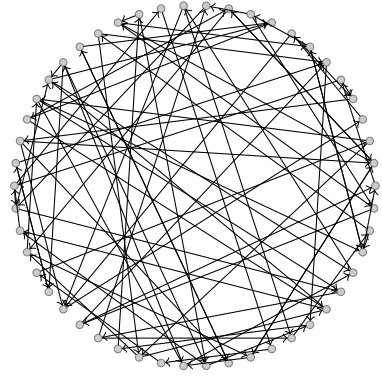
\begin{figure}[h]\label{grafo_ue}
\centering
\begin{tikzpicture}
  [scale=2.4,auto=left,every node/.style={circle,draw=black!50,scale=.3,fill=black!20}]
\node[fill=none,draw=none] (temp) at (0,1.05){};
  
  \node (n1) at (0.9921,0.1253){};  
  \node (n2) at (0.9686,0.2487){}; 
  \node (n3) at (0.9298,0.3681){}; 
  \node (n4) at (0.8763,0.4818){}; 
  \node (n5) at (0.8090,0.5878){}; \node (n6) at (0.7290,0.6845){}; \node (n7) at (0.6374,0.7705){}; \node (n8) at (0.5358,0.8443){}; \node (n9) at (0.4258,0.9048){}; \node (n10) at (0.3090,0.9511){}; \node (n11) at (0.1874,0.9823){}; \node (n12) at (0.0628,0.9980){}; \node (n13) at (-0.0628,0.9980){}; \node (n14) at (-0.1874,0.9823){}; \node (n15) at (-0.3090,0.9511){}; \node (n16) at (-0.4258,0.9048){}; \node (n17) at (-0.5358,0.8443){}; \node (n18) at (-0.6374,0.7705){}; \node (n19) at (-0.7290,0.6845){}; \node (n20) at (-0.8090,0.5878){}; \node (n21) at (-0.8763,0.4818){}; \node (n22) at (-0.9298,0.3681){}; \node (n23) at (-0.9686,0.2487){}; \node (n24) at (-0.9921,0.1253){}; \node (n25) at (-1.0000,-0.0000){}; \node (n26) at (-0.9921,-0.1253){}; \node (n27) at (-0.9686,-0.2487){}; \node (n28) at (-0.9298,-0.3681){}; \node (n29) at (-0.8763,-0.4818){}; \node (n30) at (-0.8090,-0.5878){}; \node (n31) at (-0.7290,-0.6845){}; \node (n32) at (-0.6374,-0.7705){}; \node (n33) at (-0.5358,-0.8443){}; \node (n34) at (-0.4258,-0.9048){}; \node (n35) at (-0.3090,-0.9511){}; \node (n36) at (-0.1874,-0.9823){}; \node (n37) at (-0.0628,-0.9980){}; \node (n38) at (0.0628,-0.9980){}; \node (n39) at (0.1874,-0.9823){}; \node (n40) at (0.3090,-0.9511){}; \node (n41) at (0.4258,-0.9048){}; \node (n42) at (0.5358,-0.8443){}; \node (n43) at (0.6374,-0.7705){}; \node (n44) at (0.7290,-0.6845){}; \node (n45) at (0.8090,-0.5878){}; \node (n46) at (0.8763,-0.4818){}; \node (n47) at (0.9298,-0.3681){}; \node (n48) at (0.9686,-0.2487){}; \node (n49) at (0.9921,-0.1253){}; \node (n50) at (1.0000,0.0000){};

 \foreach \from/\to in
{n1/n11,    n1/n16,    n1/n40, n2/n21,    n2/n35, n3/n8, n4/n23, n5/n4,    n5/n25, n6/n43,    n7/n10,    n7/n47,n8/n4,    n8/n31,    n9/n22,    n10/n12,    n10/n49, n11/n1,    n11/n28,    n12/n26,    n13/n31,    n13/n42,    n14/n41,    n15/n16,    n15/n38,     n16/n7,    n16/n49,     n17/n6,    n17/n35,     n18/n9,    n18/n39,n19/n20,    n19/n30,    n19/n38,     n20/n9,    n20/n46,    n21/n37,    n21/n45,    n21/n47,    n22/n11,    n22/n44, n23/n1,    n23/n31,    n23/n39,     n24/n6,    n24/n30,    n25/n14,    n25/n26,    n26/n20,    n26/n45, n27/n25,n28/n21,    n28/n40,     n29/n2,    n30/n24,     n31/n3,     n32/n6,    n32/n50,    n33/n12,    n33/n42,    n34/n15,n35/n28,    n35/n44,    n36/n29,    n37/n19,    n37/n46,    n38/n27,    n38/n43,    n39/n18,     n40/n1,    n40/n37,    n41/n36,    n42/n48,    n43/n34,    n44/n20,    n44/n27,    n45/n33,    n46/n13,     n47/n7,    n47/n50,    n48/n17,    n48/n47,    n49/n32,     n50/n5}
    \draw[-to] (\from) -- (\to);

\end{tikzpicture}   
\caption{Used Edges}
\end{figure}
\end{example}
Next example shows that variability of the amount of the data to be allocated by the various units does not significantly change the overall performance of the algorithm.
\begin{example}
In this example we consider the same number of $n=50$ users split into five equal subfamilies with varying $\alpha= 35, 40, 45, 50, 55$ (so that the average is still $45$). The underlying graph is assumed to be random regular with degree $10$ and $k_a=0.45$. A-priori there is no guarantee that an allocation exists in this case, however, the algorithm always finds one. The table below reports the value of the various performance indices which turn out to be quite close to the case of constant $\alpha_x$.

\begin{center}
\begin{table}[h]
\caption{Different $\alpha$}
\centering
\begin{tabular}{|l|c|}
\hline

\hline
$\nu_{moves}$ &      1.1552  \\
$\bar \Lambda$ &   0.6613 \\
$\Lambda_{var}$ &    0.0138 \\
$\bar C^1$ &    0.8387   \\
$C_{var}^1$ &    0.1464    \\
$\bar C^2$ & 0.9613   \\
$C_{var}^2$ &    0.0328    \\
$d^+$ &    6.4040   \\
$d^-_1$ &    6.1200   \\
$d^-_2$ &    6.6880   \\
\hline
    
\end{tabular}
\end{table}
\end{center}
\end{example}

Finally, next example shows that the algorithm has good scalability properties.

\begin{example}
In this example we consider and compare three communities with $n=50, 100, 1000$ always connected through a random regular graph of degree $10$. We assume that $\alpha_x=45$ for all units. The aggregation parameter is $k_a=0.45$.

\begin{center}
\begin{table}[h]
\caption{Asymptotic Behavior}
\centering
\begin{tabular}{|l|c|c|c|}
\hline
&50 & 100 & 1000 \\
\hline
$\nu_{moves}$&    1.1714     &  1.1490  & 1.1304 \\
$\bar \Lambda$ &       0.6606 &  0.6605  &  0.6566 \\
$\Lambda_{var}$ &        0.0143   &  0.0146  & 0.0114 \\
$\bar C^1$     &    0.8364  &  0.8370  &  0.8604  \\
$C_{var}^1$ &   0.1350   & 0.1262   &     0.1068\\
$\bar C^2$ &    0.9636   &   0.9630  &  0.9396 \\
$C_{var}^2$ &      0.0251 &   0.0183  &   0.0616 \\
$d^+$ &   6.3700    &  6.2840  &  6.1902 \\
$d^-_1$ &    6.2520   &   5.9380  &  6.0004 \\
$d^-_2$ &        6.4880  &  6.6300   &  6.3800 \\
\hline
    
\end{tabular}
\end{table}
\end{center}
\end{example}

\section{Conclusions}
In this paper we have proposed, in a game theoretic framework, a peer-to-peer decentralized storage model where a network of units are, at the same time, end users needing to allocate externally a back up of their data, as well storage resources for other users. We have proposed a novel fully distributed algorithm where units, connected through a network, activate autonomously at random time and either allocate or move pieces of their data among the neighboring available resources. Actions taken by the units are noisy best response actions with respect to utility functions which incorporate the congestion of the resources, their reliability, as well possibly, the fragmentation of the stored data. The algorithm has been claimed to converge, with probability $1$, to a Nash equilibrium of the game and several numerical simulations have here validated this claim. In a forthcoming paper, we will propose a detailed theoretical analysis of our algorithm.

We believe that there are several challenges related to the peer-to-peer storage model which have not yet been satisfactorily addressed by pure mathematical model. Some of them are reported below and will be the goals of our future research.

\begin{itemize}
\item In realistic scenarios, for security reasons, units need to allocate more than one copy of their own data. This poses new issues and constraints as it becomes fundamental that copies of the same data are not stored in the same resource. Most of our theoretical results, including the matching problem equivalence, will need to be completely revisited to be applied in this new scenario.

\item Our simulations show that the algorithm has very good convergence properties. It will be of interest to establish theoretical bounds on the convergence time.

\item In our model, units, thought as resources, are assumed to be always on and available, if not yet fully congested, for storage actions. It would be of interest to let the possibility that resources might be off at certain times and to connect this behavior to the reliability parameter present in the utility functions.

\item Units, in our model, are completely anonymous and resources do not make any filter on new allocation requests. More interesting models should incorporate trust formation mechanisms where more trusted units (when thought as resources) should have a vantage in finding place to store their data.
\end{itemize}

\section*{Acknowledgment}
We acknowledge that this work has been done while Barbara Franci was a PhD Student sponsored by a Telecom Italia grant.

\end{document}